\newcommand{\sneg}{\mathord{\sim}}
\newcommand{\B}{\Box}
\newcommand{\D}{\Diamond}
\theoremstyle{plain}
\newtheorem{definition}{Definition}[section]
\newtheorem{theorem}[definition]{Theorem}
\newtheorem{lemma}[definition]{Lemma}
\newtheorem{proposition}[definition]{Proposition}
\theoremstyle{definition}
\newtheorem{example}[definition]{Example}
\title{\bf Propositional dynamic logic with Belnapian truth values}
\author[1]{Igor Sedl\'{a}r\thanks{E-mail: sedlar@cs.cas.cz. This work was supported by the long-term strategic development financing of the Institute of Computer Science (RVO:67985807) and the Czech Scientific Agency grant GBP202/12/G061. Previous versions of this article were presented at the Knowledge Representation Seminar of the Department of Applied Informatics, Faculty of Mathematics, Physics and Informatics, Comenius University in Bratislava, and the Applied Mathematical Logic Seminar of the Department of Theoretical Computer Science, Institute of Computer Science, The Czech Academy of Sciences. I am grateful to the audiences at these seminars for useful feedback. Martin Bal\'{a}\v{z}, Petr Cintula, Rostislav Hor\v{c}\'{i}k and J\'{a}n \v{S}efr\'{a}nek have provided especially helpful comments. Thanks are also due to the AiML reviewers for suggesting a number of improvements.}}
\affil[1]{Institute of Computer Science, The Czech Academy of Sciences, Pod Vod\'{a}renskou v\v{e}\v{z}\'{i} 271/2, 182 07 Prague 8, Czech Republic}
\date{ }
\begin{document}
\maketitle

\begin{small}

\paragraph{Abstract.} We introduce $\mathbf{BPDL}$, a combination of propositional dynamic logic $\mathbf{PDL}$ with the basic four-valued modal logic $\mathbf{BK}$ studied by Odintsov and Wansing (`Modal logics with Belnapian truth values', J. Appl. Non-Class. Log. 20, 279--301 (2010)). We modify the standard arguments based on canonical models and filtration to suit the four-valued context and prove weak completeness and decidability of $\mathbf{BPDL}$.

\paragraph{Keywords.} Belnap--Dunn logic, Four-valued logic, Propositional dynamic logic.
\end{small}

\section{Introduction}
Propositional dynamic logic $\mathbf{PDL}$ is a well-known logical framework that allows to express properties of regular programs and formalises reasoning about these properties \cite{FL1979,Harel2000}. The framework sees programs as state transitions, or binary relations on states, where states of the computer are viewed as complete and consistent possible worlds. A more general notion of computer state has been put forward by Belnap and Dunn \cite{Belnap1977a,Belnap1977b,Dunn1976}. In a possible world, every formula is either true or false. In a Belnap--Dunn state, formulas can be (only) true, (only) false, both true and false, or neither true nor false. Informally, Belnap--Dunn states are seen as bodies of information about some domain and the four truth values correspond to presence or absence of information about the domain. More precisely, the four possible truth values of a formula $\phi$ express four possible answers to the query `What is the available information about $\phi$?', namely: 
\begin{itemize}
\item there is information that $\phi$ is true and no information that $\phi$ is false (`true');
\item there is information that $\phi$ is false and no information that $\phi$ is true (`false');
\item there is information that $\phi$ is true but also information that $\phi$ is false (`both');
\item there is no information about $\phi$ (`neither').
\end{itemize}   
\noindent Belnap and Dunn stress the importance of this generalisation to computer science, pointing mainly to databases as a potential area of application. Later work on bilattices, a generalisation of the Belnap--Dunn notion of state, has confirmed their assessment and extended the applications to other areas \cite{Ginsberg1988,Gargov1999,Arieli1996,Fitting1991b,Fitting2006}. 

Putting things together, a version of $\mathbf{PDL}$ using Belnap--Dunn states would formalise reasoning about regular programs that modify (possibly incomplete and inconsistent) database-like structures. Such structures abound and a logical formalisation of reasoning about their algorithmic transformations could be of vital importance to AI and related areas. In addition to practical applications, theoretical questions pertaining to the properties of such generalised versions of $\mathbf{PDL}$ are interesting in their own right. However, Belnap--Dunn versions of $\mathbf{PDL}$ are yet to be investigated.

This article fills the gap. We discuss $\mathbf{BPDL}$, a logic that adds program modalities to Odintsov and Wansing's \cite{Odintsov2010} basic modal logic with Belnapian truth values $\mathbf{BK}$ (see also \cite{Odintsov2012,Odintsov2016}). Our main technical results concerning $\mathbf{BPDL}$ (introduced in Section \ref{sec: BPDL} of the article) are a decidability proof using a variation of the standard argument based on filtration (Section \ref{sec: decidability}) and a sound and weakly complete axiomatisation (Section \ref{sec: completeness}). We assume familiarity with $\mathbf{PDL}$, but a short overview of $\mathbf{BK}$ is provided in Section \ref{sec: BK}.

We note that there are other well-known four-valued modal logics, but there are reasons to favour $\mathbf{BK}$ when it comes to combinations with $\mathbf{PDL}$. Priest's basic modal First-Degree-Entailment $\mathbf{K}_\mathbf{FDE}$ \cite{Priest2008} lacks a sensible implication connective (e.g., Modus ponens fails), which is a problem given the importance of implication in stating properties of programs such as partial correctness. Goble's $\mathbf{KN4}$ \cite{Goble2006} corresponds to a fragment of $\mathbf{BK}$. The framework of Rivieccio, Jung and Jansana \cite{Rivieccio2015} is more complicated than $\mathbf{BK}$ in that it treats the modal accessibility relation itself as many-valued. As a result, for instance, the familiar `$\mathbf{K}$ axiom' $ \B(\phi \to \psi) \to (\B\phi \to \B\psi)$ is not valid. This is problematic from the viewpoint of $\mathbf{PDL}$ which is a normal modal logic. (However, a non-normal version of $\mathbf{PDL}$ built on this framework might still be interesting to look at in the future.) Another approach is to add to $\mathbf{PDL}$ a modal DeMorgan negation in the style of \cite{FHV95}. However, the modal negation in this framework does not fit in with implication as nicely as the negation in $\mathbf{BK}$ (for instance, $\sneg(\phi\to\psi)$ does not entail $ \phi$, where `$\sneg$' is the DeMorgan negation). Nevertheless, this approach is pursued by the present author in \cite{Sedlar2016}.

The general idea of providing many-valued versions of $\mathbf{PDL}$ is not new. Teheux \cite{Teheux2014} formulates $\mathbf{PDL}$ over finitely-valued {\L}ukasiewicz logics to model the R\'{e}nyi--Ulam searching game with errors. However, the non-modal fragments of his logics are non-classical, as opposed to $\mathbf{BK}$ which can be seen as an extension of the classically-based logic $\mathbf{K}$ with a strong negation. B\v{e}hounek \cite{behounek08,bbc08} suggests that $\mathbf{PDL}$ with fuzzy accessibility relations is suitable for reasoning about costs of program executions, but the states in his models remain classical.

\section{Modal logic with Belnapian truth values}\label{sec: BK}
This section provides background on $\mathbf{BK}$ and motivates our extension of the logic with program modalities. The language $\mathcal{L}_{mod}$ consists of $AF$, a countable set of atomic formulas, a nullary connective $\bot$, unary connectives $\sneg, \B, \D$ and binary connectives $\land, \lor, \to$. $\neg \phi$ is defined as $\phi \to \bot$, $\top$ is defined as $\neg \bot$ and $\phi  \leftrightarrow \psi$ is defined as $(\phi \to \psi) \land (\psi \to \phi)$. $F_{mod}$ is the set of formulas of $\mathcal{L}_{mod}$.

\begin{definition}{\cite[285--286]{Odintsov2010}}
An \emph{Odintsov--Wansing model} is a tuple $M = \langle S, R, V^{+}, V^{-} \rangle$ where $S \neq \emptyset$, $R \subseteq (S \times S)$ and $V^{\circ}: AF \mapsto 2^{S}$, $\circ = \{+,-\}$. Every $M$ induces a pair of relations $\models_M^{+}, \models_M^{-} \: \subseteq\: (S \times F_{mod})$ such that (we usually drop the subscript `$M$'):
\begin{enumerate}
\item $x \models^{+} p$ iff $x \in V^{+}(p)$; $x \models^{-} p$ iff $x \in V^{-}(p)$
\item $x \models^{+} \bot$ for no $x$; $x \models^{-} \bot$ for all $x$
\item $x \models^{+} \sneg \phi$ iff $x \models^{-} \phi$; $x \models^{-} \sneg \phi$ iff $x \models^{+} \phi$
\item $x \models^{+} \phi \land \psi$ iff $x \models^{+} \phi$ and $x \models^{+} \psi$; $x \models^{-} \phi \land \psi$ iff $x \models^{-} \phi$ or $x \models^{-} \psi$
\item $x \models^{+} \phi \lor \psi$ iff $x \models^{+} \phi$ or $x \models^{+} \psi$; $x \models^{-} \phi \lor \psi$ iff $x \models^{-} \phi$ and $x \models^{-} \psi$
\item $x \models^{+} \phi \to \psi$ iff $x \not\models^{+} \phi$ or $x \models^{+} \psi$; $x \models^{-} \phi \to \psi$ iff $x \models^{+} \phi$ and $x \models^{-} \psi$
\item $x \models^{+} \B \phi$ iff for all $y$, if $Rxy$, then $y \models^{+} \phi$\\ $x \models^{-} \B \phi$ iff there is $y$ such that $Rxy$ and $y \models^{-} \phi$
\item $x \models^{+} \D \phi$ iff there is $y$ such that $Rxy$ and $y \models^{+} \phi$\\ $x \models^{-} \D \phi$ iff for all $y$, if $Rxy$, then $y \models^{-} \phi$
\end{enumerate} 	
$|\phi|^{+}_M = \{ x \mid x \models^{+}_M \phi \}$ and $ |\phi|^{-}_M = \{x \mid x \models^{-}_M \phi\}$. Entailment in the resulting logic, $\mathbf{BK}$, is defined as $\models^{+}$-preservation in every state of every model ($X \models_\mathbf{BK} \phi$ iff, for all $M$, $\bigcap_{\psi \in X}|\psi|^{+}_M \subseteq |\phi|^{+}_M$). Validity is defined as usual ($\phi$ is valid in $\mathbf{BK}$ iff $\emptyset \models_\mathbf{BK} \phi$).
\end{definition}

States $x \in S$ can be seen as database-like bodies of information. The fact that $x \models^{+} \phi$ can then be read as `$x$ provides information that $\phi$ is true' (or `$x$ supports $\phi$', `$x$ verifies $\phi$') and $x \models^{-} \phi$ as `$x$ provides information that $\phi$ is false' (`$x$ falsifies $\phi$'). Consequently, $|\phi|^{+}$ is seen as the set of states in which $\phi$ is true (the truth set of $\phi$) and $|\phi|^{-}$ as the set of states in which $\phi$ is false (falsity set). Entailment then boils down to the usual notion of truth-preservation. The distinguishing feature of the Belnap--Dunn picture is that some bodies of information $x$ may support conflicting information about some $\phi$ (if $x \models^{+} \phi$ and $x \models^{-} \phi$) and some bodies of information $x$ may not provide any information about some $\phi$ at all (if $x \not\models^{+} \phi$ and $x \not\models^{-} \phi$). In other words, $|\phi|^{+}$ and $|\phi|^{-}$ may have a non-empty intersection and their union is not necessarily identical to $S$. 

The two negations `$\sneg$' and `$\neg$' can be explained as follows. The formula $\sneg \phi$ may be read as `$\phi$ is false' (as $x \models^{+} \sneg \phi$ iff $x \models^{-} \phi$). On the other hand, the formula $\neg \phi$ is read as `$\phi$ is not true' (note that $x \models^{+} \phi \to \bot$ iff $x \not\models^{+} \phi$). In general, neither $\sneg \phi \to \neg \phi$ nor $\neg \phi \to \sneg \phi$ are valid. In other words, the present framework treats `false' and `not true' as two independent notions. The presence of `$\sneg$' and `$\neg$' in our language allows to express the four possible Belnapian truth values of a formula $\phi$:
\begin{itemize}
\item $\phi \land \neg\sneg \phi$ ($\phi$ is only true, i.e., true and not false); 
\item $\neg \phi \land \sneg \phi$ ($\phi$ is only false, i.e., not true and false);
\item $\phi \land \sneg \phi$ ($\phi$ is both true and false);
\item $\neg \phi \land \neg\sneg \phi$ ($\phi$ is neither true nor false).
\end{itemize}

\begin{theorem}\label{thm: bk complete}
The following axiom system, $H(\mathbf{BK})$, is a sound and strongly complete axiomatisation of $\mathbf{BK}$:
\begin{enumerate}
\item Axioms of classical propositional logic in the language $\{ AF, \bot, \to, \land, \lor \}$ and Modus ponens;
\item Strong negation axioms:
	\begin{align*}
	\sneg \sneg \phi  & \leftrightarrow \phi,\\
	\sneg (\phi \land \psi) & \leftrightarrow (\sneg \phi \lor \sneg \psi),\\
	\sneg (\phi \lor \psi) & \leftrightarrow (\sneg \phi \land \sneg \psi),\\
	\sneg (\phi \to \psi) & \leftrightarrow (\phi \land \sneg \psi),\\
	\top &\leftrightarrow \sneg \bot;
	\end{align*}
\item The $\mathbf{K}$ axiom $\B (\phi \to \psi) \to (\B \phi \to \B \psi)$ and the Necessitation rule $\phi / \B \phi$;
\item Modal interaction principles:
	\begin{align*}
	\neg \B \phi & \leftrightarrow \D \neg \phi,\\
	\neg \D \phi & \leftrightarrow \B \neg \phi,\\
	\sneg \B \phi & \leftrightarrow \D \sneg \phi,\\
	\B \phi & \leftrightarrow \sneg \D \sneg \phi,\\
	\sneg \D \phi & \leftrightarrow \B \sneg \phi,\\
	\D \phi & \leftrightarrow \sneg \B \sneg \phi.
	\end{align*}
\end{enumerate}
\end{theorem}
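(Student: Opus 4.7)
The plan is to follow the standard canonical-model strategy, adapted to handle the two independent relations $\models^{+}$ and $\models^{-}$. Soundness is a routine inductive check: the classical axioms and Modus ponens are validated because the clauses for $\{\bot,\land,\lor,\to\}$ under $\models^{+}$ are literally the classical ones; the strong-negation axioms express that $\models^{-}$ propagates through the connectives in the DeMorgan fashion built into the semantics; the $\mathbf K$-axiom and Necessitation are verified just as in classical $\mathbf K$ (since the $\Box$ clause for $\models^{+}$ is the classical one); and each interaction principle corresponds directly to a biconditional between one of the $\models^{-}$ clauses (for $\B$ or $\D$) and the dual $\models^{+}$ clause of the other modality.

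For completeness, I would build the canonical model $M^{c}=\langle S^{c},R^{c},V^{c+},V^{c-}\rangle$ as follows. Since $H(\mathbf{BK})$ contains full classical propositional logic in $\{\bot,\land,\lor,\to\}$ and is closed under Modus ponens, a standard Lindenbaum construction gives, for every $H(\mathbf{BK})$-consistent set $X$, a maximal consistent extension $\Gamma\supseteq X$; let $S^{c}$ be the collection of all such $\Gamma$. Define $R^{c}\Gamma\Delta$ iff $\{\phi\mid\B\phi\in\Gamma\}\subseteq\Delta$, and
\[
V^{c+}(p)=\{\Gamma\mid p\in\Gamma\},\qquad V^{c-}(p)=\{\Gamma\mid\sneg p\in\Gamma\}.
\]
The key auxiliary step is the Truth Lemma, proved by simultaneous induction on $\phi$ for both relations: for every $\Gamma\in S^{c}$,
\[
\Gamma\models^{+}\phi\iff\phi\in\Gamma,\qquad\Gamma\models^{-}\phi\iff\sneg\phi\in\Gamma.
\]
The atomic and $\bot$ cases are immediate from the definition of $V^{c\pm}$ together with the fact that maximal consistent sets contain $\sneg\bot$ (via $\top\leftrightarrow\sneg\bot$). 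The $\sneg$ case reduces to the two halves of the induction hypothesis using $\sneg\sneg\phi\leftrightarrow\phi$. The cases for $\land,\lor,\to$ split into a $\models^{+}$ half, handled by the classical behaviour of maximal consistent sets, and a $\models^{-}$ half, handled by the corresponding strong-negation axiom (e.g.\ $\sneg(\phi\to\psi)\leftrightarrow(\phi\land\sneg\psi)$ drives the implication case for $\models^{-}$).

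The modal cases are where the main work lies. For $\models^{+}\B\phi$ I would use the standard existence lemma: if $\B\phi\notin\Gamma$, then $\{\psi\mid\B\psi\in\Gamma\}\cup\{\neg\phi\}$ is consistent (using the $\mathbf K$-axiom and Necessitation in the usual way), extends to some $\Delta$ with $R^{c}\Gamma\Delta$ and $\phi\notin\Delta$, contradicting $\Gamma\models^{+}\B\phi$. For $\models^{-}\B\phi$ I would reduce via $\sneg\B\phi\leftrightarrow\D\sneg\phi$ to a statement about $\D$, and then further via $\D\chi\leftrightarrow\sneg\B\sneg\chi$ (together with $\sneg\sneg\chi\leftrightarrow\chi$) back to an assertion about $\B$, so that the semantic existential quantifier on the right-hand side of clause 7 for $\models^{-}\B$ is matched by a membership statement $\D\sneg\phi\in\Gamma$. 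The $\D$ cases are entirely analogous, or one can treat $\D$ as defined by $\sneg\B\sneg$ throughout. Strong completeness then follows in the customary way: if $X\not\vdash\phi$, extend $X\cup\{\neg\phi\}$ to a maximal consistent $\Gamma$; by the Truth Lemma, $\Gamma\models^{+}\psi$ for all $\psi\in X$ and $\Gamma\not\models^{+}\phi$. The main obstacle is keeping the two halves of the induction synchronised in the modal step, but this is precisely what the six interaction principles are designed to guarantee.
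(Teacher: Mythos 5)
The paper does not actually prove this theorem: it is Odintsov and Wansing's result, quoted with the proof deferred to the citation, and the closest in-paper analogue is the canonical-model argument run for $\mathbf{BPDL}$ in Section 5. Your sketch follows exactly that strategy (maximal consistent sets, $V^{-}$ read off from $\sneg$-membership, a simultaneous truth lemma for $\models^{+}$ and $\models^{-}$, the strong-negation axioms driving the $\models^{-}$ clauses for the connectives, the interaction axioms driving the modal $\models^{-}$ clauses), so structurally it is the intended proof, and soundness is indeed routine.

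One step as you state it does not go through, however. For the $\models^{-}\B\phi$ case you correctly reduce the problem to showing that $\D\sneg\phi\in\Gamma$ iff some $R^{c}$-successor of $\Gamma$ contains $\sneg\phi$, but your proposed justification --- passing through $\D\chi\leftrightarrow\sneg\B\sneg\chi$ together with $\sneg\sneg\chi\leftrightarrow\chi$, or ``treating $\D$ as $\sneg\B\sneg$ throughout'' --- is not available. First, it tacitly replaces $\sneg\sneg\chi$ by $\chi$ inside a $\sneg$-context, and the theorems of $\mathbf{BK}$ are not closed under unrestricted replacement (the paper stresses this). Second, and more importantly, knowing $\sneg\B\sneg\chi\in\Gamma$ tells you nothing about whether $\B\sneg\chi\in\Gamma$: in a maximal consistent set only the classical negation $\neg$ tracks non-membership, whereas $\sneg\theta\in\Gamma$ is perfectly compatible with $\theta\in\Gamma$. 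So the $\sneg$-dualities cannot yield the existence lemma for $\D$. That lemma must come from the first two interaction axioms: from $\neg\D\chi\leftrightarrow\B\neg\chi$ one derives, by purely classical reasoning, $\D\chi\leftrightarrow\neg\B\neg\chi$, and then the usual argument applies (if $\D\chi\in\Gamma$ then $\B\neg\chi\notin\Gamma$, so $\{\psi\mid\B\psi\in\Gamma\}\cup\{\chi\}$ is consistent by the $\mathbf{K}$ axiom and Necessitation, and extends to the required successor; conversely, if $\D\chi\notin\Gamma$ then $\neg\D\chi\in\Gamma$, hence $\B\neg\chi\in\Gamma$ and no successor contains $\chi$). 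With this repair, the axiom $\sneg\B\phi\leftrightarrow\D\sneg\phi$ closes the $\models^{-}\B$ case, $\sneg\D\phi\leftrightarrow\B\sneg\phi$ handles $\models^{-}\D$ (which only needs the $\B$-side of the definition of $R^{c}$), and your induction and the concluding strong-completeness argument go through as written.
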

\begin{proof}
See \cite{Odintsov2010}.
\end{proof}

The logic $\mathbf{BK}$ enjoys the deduction theorem in the sense that $\phi \models \psi$ iff $\models \phi \to \psi$.\footnote{Proof: $\not\models \phi \to \psi$ iff, for some $x$, $x \not\models^{+} \phi \to \psi$ iff, for some $x$, $x \models^{+} \phi$ and $x \not\models^{+} \psi$ iff $\phi \not\models \psi$.} An interesting feature of $\mathbf{BK}$ is that the set of valid formulas is not closed under the Replacement rule $\phi \leftrightarrow \psi / \chi(\phi) \leftrightarrow \chi(\psi)$.\footnote{Note, for example, that $\sneg (\phi \to \psi) \leftrightarrow (\phi \land \sneg \psi)$ is valid by the completeness theorem but $\sneg \sneg (\phi \to \psi) \leftrightarrow \sneg (\phi \land \sneg \psi)$ is not. The latter is provably equivalent to $(\phi \to \psi) \leftrightarrow (\sneg \phi \lor \psi)$. Now consider a model where $x \not\models^{+} p$, $x \not\models^{-} p$ and $x \not\models^{+} q$. Then $x \models^{+} p \to q$ but $x \not\models^{+} \sneg p \lor q$. By the deduction theorem, $(p \to q) \to (\sneg p \lor q)$ is not valid. It is easily shown that the converse implication is not valid either.} However, it is closed under the Positive replacement rule $\phi \leftrightarrow \psi / \gamma (\phi) \leftrightarrow \gamma (\psi)$ for $\sneg$-free $\gamma$ and the Weak replacement rule $(\phi \leftrightarrow \psi) \land (\sneg \phi \leftrightarrow \sneg \psi) / \chi(\phi) \leftrightarrow \chi(\psi)$. (See \cite{Odintsov2010} for details.) Schemas $(\phi \land \sneg \phi) \to \bot$ and $\phi \lor \sneg \phi$ are not valid (but, of course, $(\phi \land \neg \phi) \to \bot$ and $\phi \lor \neg \phi$ both are). 

Languages interpreted over bilattices often contain two additional binary connectives `$\otimes$' and `$\oplus$'. Their meaning can be outlined by the following example using the reading of the four Belnapian truth values as subsets of the set of `classical' values $\{\text{true}, \text{false}\}$. If $\phi$ is only true and $\psi$ is only false (the value of $\phi$ is $\{\text{true}\}$ and the value of $\psi$ is $\{\text{false}\}$), then $\phi \otimes \psi$ is neither true nor false ($\{\text{true}\} \cap \{\text{false}\} = \emptyset$) whereas $\phi \oplus \psi$ is both true and false ($\{\text{true}\} \cup \{\text{false}\} = \{ \text{true}, \text{false}\}$).\footnote{Closer  to the present setting,  $\phi \otimes \psi$ is taken to be verified (falsified) iff both $\phi$ and $\psi$ are verified (falsified); and $\phi \oplus \psi$ is verified (falsified) iff at least one of $\phi, \psi$ is verified (falsified). (Hence, for example, extending $\mathbf{BK}$ with these connectives would result in $(\phi \otimes \psi) \leftrightarrow (\phi \land \psi)$ and $(\sneg \phi \otimes \sneg \psi) \leftrightarrow (\sneg \phi \land \sneg \psi)$ being both valid, and similarly for $\oplus$ and $\lor$.)} Odintsov and Wansing \cite{Odintsov2010} do not use these connectives in the modal setting and, for the sake of simplicity, we omit them as well. We note, however, that there is no technical obstacle in introducing them to the framework and, speaking in terms of informal interpretation, they fit in nicely also to our combination of $\mathbf{BK}$ with $\mathbf{PDL}$.

Let us now return to the informal interpretation of $\mathbf{BK}$. If states in the model are seen as database-like bodies of information, then the accessibility relation can be construed as any binary relation between such bodies of information. Interpretations related to transformations of such bodies (adding or removing information, for example) are a natural choice. For instance, with a set of available transformations in mind, we may read $Rxy$ as `$y$ is the result of transforming $x$ in some available way'. $\D \phi$ then means that there is an available transformation of the present body of information that leads to $\phi$ being supported and $\B \phi$ means that all available transformations lead to $\phi$ being supported. Hence, $\mathbf{BK}$ can be seen as a general formalism for reasoning about such transformations.

This reading of $R$ invites us to generalise the framework to a multi-modal setting. We may want to distinguish between different types of transformation and so we may need $R_i$ for each type $i$ instead of a single relation $R$. The corresponding formulas of a multi-modal extension of $\mathcal{L}_{mod}$, $\B_i\phi$ ($\D_i\phi$), would then express that $\phi$ is supported after every (some) transformation of type $i$. With a number of basic types at hand, the natural next step is to introduce complex transformations consisting of transformations of the basic types. This brings us to extending $\mathbf{BK}$ with program operators provided by $\mathbf{PDL}$, i.e., choice, composition, iteration and test. Additional motivation for considering a combination of $\mathbf{PDL}$ with $\mathbf{BK}$ is given by the following examples.

\begin{example}\label{exam: default}
If Belnap--Dunn states are seen as bodies of information, then state transitions (programs) may be seen as general inference rules. Formulas of the combined language may express the nature and properties of these rules. Introducing a Belnapian negation $\sneg$ into the language of $\mathbf{PDL}$ opens the possibility of expressing inferences beyond the scope of classical logic. Take, for example, \emph{default rules} of the form
\begin{equation}\label{eq: default rule}
\frac{\psi : \phi}{\chi}
\end{equation}
read `If $\psi$ is true and there is no information that $\phi$ is false, then infer that $\chi$ is true'. Such a default rule may  be expressed by
\begin{equation}\label{eq: default formula}
(\psi \land \neg\sneg \phi) \to [\alpha]\chi,
\end{equation}
a formula that reads `If $\psi$ is true and $\phi$ is not false, then every terminating execution of $\alpha$ leads to a state where $\chi$ true'. If \eqref{eq: default formula} holds in a state, then executing the program $\alpha$ in the state is equivalent to using \eqref{eq: default rule} in the state. Hence, \eqref{eq: default rule} and $\alpha$ are `locally equivalent' in the given state. Moreover, if
\begin{equation*}
[\beta^{\ast}]\left((\psi \land \neg\sneg \phi) \to [\alpha]\chi\right)
\end{equation*}
holds in a state, then \eqref{eq: default rule} and $\alpha$ are `$\beta$-equivalent', or locally equivalent in every state reachable by a finite iteration of $\beta$. 

Formulas of the form \eqref{eq: default formula} may even be seen as \emph{defining} $\alpha$ to be a counterpart of a specific default rule. On this view, it is natural to focus only on models where \eqref{eq: default formula} holds in every state (is valid). This motivates a notion of global consequence to be introduced below.    
\end{example}

\begin{example}
A special case of \eqref{eq: default rule} is the \emph{closed-world assumption} rule
\begin{equation}\label{eq: cwa rule}
\frac{\top: \neg \phi}{\sneg \phi},
\end{equation}
inferring that $\phi$ is false from the assumption that $\phi$ is not known to be true. Applications of \eqref{eq: cwa rule} correspond to executions of $\alpha$ in states where it is the case that
\begin{equation*}
\neg\phi \to [\alpha]\sneg\phi
\end{equation*}
\end{example}

\begin{example}\label{exam: inc removal}
More generally, state transitions (programs) on Belnap--Dunn states may be seen as arbitrary modifications of states. Program $\alpha_1$ is locally equivalent to `marking $ \phi$ as true' and $\alpha_2$ to `marking $\psi$ as false' if 
\begin{equation*} [\alpha_1]\phi \land [\alpha_2]\sneg\psi
\end{equation*}
holds in the given state and similarly for $\beta$-equivalence. More interestingly, the formula
\begin{equation*}
(\phi \land \sneg \phi) \land \langle \alpha^{\ast}\rangle \neg (\phi \land \sneg \phi)
\end{equation*}
says that there is inconsistent information about $\phi$ in the present state, but the inconsistency is removed after some finite number of executions of $\alpha$. In other words, $\alpha$ is a $\phi$-inconsistency-removing modification. 

Again, we may see the above formulas as defining the respective programs to be counterparts of specific modifications of states. 
\end{example}

\section{$\mathbf{BPDL}$}\label{sec: BPDL}
The language $\mathcal{L}_{dyn}$ is a variant of the language of $\mathbf{PDL}$, containing two kinds of expressions, namely, programs $P$ and formulas $F$: 
\begin{align*}
P \quad \alpha & ::= a \mid \alpha ; \alpha \mid \alpha \cup \alpha \mid \alpha^{\ast} \mid \phi? \\
F \quad \phi & ::= p \mid \bot \mid \sneg \phi \mid \phi \land \phi \mid \phi \lor \phi \mid \phi \to \phi \mid [\alpha]\phi \mid \langle \alpha \rangle \phi
\end{align*}
($a \in AP$, a countable set of atomic programs, and $p \in AF$) $\neg \phi$, $\top$ and $\phi \leftrightarrow \psi$ are defined as in $\mathcal{L}_{mod}$.

\begin{definition}
A \emph{standard dynamic Odintsov--Wansing model} is a tuple $\mathcal{M} = \langle S, R, V^{+}, V^{-} \rangle$, where $S, V^{+}$ and $ V^{-}$ are as in Odintsov--Wansing models. $\models_\mathcal{M}^{+}$ and $\models_\mathcal{M}^{-}$ are defined as before for $\{ AF, \bot, \sneg, \land, \lor, \to \}$. $R$ is a function from $P$ to binary relations on $S$ such that $R(\alpha; \beta)$ ($R(\alpha \cup \beta)$) is the composition (union) of $R(\alpha)$ and $R(\beta)$; $R(\alpha^{\ast})$ is the reflexive transitive closure $R(\alpha)^{\ast}$ of $R(\alpha)$; and $R(\phi?)$ is the identity relation on $|\phi|^{+}$. Moreover ($R_\alpha$ is short for $R(\alpha)$): 
\begin{enumerate}
\item $x \models^{+} [\alpha] \phi$ iff for all $y$, if $R_\alpha xy$, then $y \models^{+} \phi$
\item $x \models^{-} [\alpha] \phi$ iff there is $y$ such that $R_\alpha xy$ and $y \models^{-} \phi$\\
\item $x \models^{+} \langle \alpha \rangle \phi$ iff there is $y$ such that $R_\alpha xy$ and $y \models^{+} \phi$
\item $x \models^{-} \langle \alpha \rangle \phi$ iff for all $y$, if $R_\alpha xy$, then $y \models^{-} \phi$
\end{enumerate}
Entailment in $\mathbf{BPDL}$ is defined as $\models^{+}$-preservation in every state of every standard dynamic Odintsov--Wansing model. Validity in $\mathcal{M}$ and (logical) validity $\models \phi$ are defined as usual. ($\mathcal{M} \models \phi$ iff $x \models^{+}_\mathcal{M} \phi$ for all states $x \in S$ of $\mathcal{M}$; $\models \phi$ iff $\mathcal{M} \models \phi$ for every standard dynamic Odintsov--Wansing model $\mathcal{M}$.) In addition to `local' entailment, we define the \emph{global consequence} relation as follows: $X \models^{g} \phi$ iff, for all $\mathcal{M}$, if every $\psi \in X$ is valid in $\mathcal{M}$, then so is $\phi$.

A \emph{non-standard} dynamic Odintsov--Wansing model is defined exactly as a standard model, with one exception: $R(\alpha^{\ast})$ is required to be a superset of $R(\alpha)^{\ast}$ (the converse inclusion is not assumed) such that
\begin{align}
|[\alpha^{\ast}]\phi|^{+} & = |\phi \land [\alpha][\alpha^{\ast}]\phi|^{+} \label{eq: iteration 1}\\
|[\alpha^{\ast}]\phi|^{+} & \supseteq |\phi \land [\alpha^{\ast}](\phi \to [\alpha]\phi)|^{+}
\end{align}
and
\begin{align}
|\langle \alpha^{\ast}\rangle\phi|^{+} & = |\phi \lor \langle \alpha\rangle\langle \alpha^{\ast}\rangle \phi|^{+}\\
|\langle \alpha^{\ast}\rangle\phi|^{+} & \subseteq |\phi \lor \langle\alpha^{\ast}\rangle (\neg \phi \land \langle\alpha\rangle\phi)|^{+}\label{ewq: iteration 4}
\end{align}
\end{definition}

In dynamic Odintsov--Wansing models, $\phi?$ tests whether $\phi$ is true. Hence, test $\phi?$ executes successfully in two cases: if $\phi$ is only true and if $\phi$ is both true and false. However, if a more precise assessment of $\phi$ is needed, one can use $(\phi \land \neg\sneg \phi)?$ and $(\phi \land \sneg \phi)?$.  

\begin{lemma}\label{lem: bk axioms valid}
All the $H(\mathbf{BK})$ axiom schemata of Theorem \ref{thm: bk complete}, with all $\B$ replaced by $[\alpha]$ and all $\D$ replaced by $\langle \alpha \rangle$, are valid in $\mathbf{BPDL}$. Moreover, the set of formulas valid in any (standard or non-standard) model is closed under Modus ponens and the Necessitation rule $\phi / [\alpha] \phi$.
\end{lemma}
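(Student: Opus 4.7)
The plan is to reduce the lemma to the soundness direction of Theorem 1.2. The central observation is that, for any dynamic Odintsov--Wansing model $\mathcal{M} = \langle S, R, V^{+}, V^{-}\rangle$ and any fixed program $\alpha$, the tuple $M_{\alpha} := \langle S, R(\alpha), V^{+}, V^{-}\rangle$ is an ordinary Odintsov--Wansing model in the sense of Definition 2.1, and the $\mathbf{BPDL}$ satisfaction clauses for $[\alpha]$ and $\langle\alpha\rangle$ with respect to $R_{\alpha}$ are \emph{formally identical} to the $\mathbf{BK}$ clauses for $\B$ and $\D$ with respect to $R$. The clauses for $\bot, \sneg, \land, \lor, \to$ are carried over verbatim from $\mathbf{BK}$.

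First I would dispense with the classical propositional axioms and the strong negation axioms. Since each of these schemata involves only the connectives $\bot, \to, \land, \lor, \sneg$, whose positive and negative satisfaction conditions in $\mathbf{BPDL}$ coincide with those in $\mathbf{BK}$, the soundness argument from the proof of Theorem 1.2 applies verbatim to any state $x$ of any $\mathcal{M}$. Closure under Modus ponens is immediate from the $\to$-clause, which yields $x \models^{+} \phi \to \psi$ and $x \models^{+}\phi$ jointly implying $x \models^{+}\psi$ at every $x$.

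Next I would handle the $\mathbf{K}$-axiom instance $[\alpha](\phi \to \psi) \to ([\alpha]\phi \to [\alpha]\psi)$ and the six modal interaction principles (with $\B, \D$ replaced by $[\alpha], \langle\alpha\rangle$). Fix any instance, any model $\mathcal{M}$, any state $x$, and the specific program $\alpha$ occurring in the instance. Because the positive and negative satisfaction clauses for $[\alpha]$ and $\langle\alpha\rangle$ at $x$ only consult $R_{\alpha}$-successors of $x$ together with their $\models^{\pm}$-behaviour on the immediate subformulas, the $\mathbf{BK}$-soundness verification for the analogous $\B$/$\D$-schema transfers directly to $\mathcal{M}$ with $R$ replaced by $R_{\alpha}$. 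Necessitation is analogous: if $\phi$ is valid in $\mathcal{M}$ then $y \models^{+} \phi$ for every $y \in S$, hence in particular for every $R_{\alpha}$-successor of any $x$, so the $[\alpha]$-clause gives $x \models^{+}[\alpha]\phi$ at every $x$.

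There is no genuine obstacle in this lemma; the single point worth articulating carefully is the \emph{uniformity in $\alpha$}. Once one records that $R(\alpha)$ is simply an arbitrary binary relation on $S$ and that the modal clauses of $\mathbf{BPDL}$ are parametric in this relation exactly as the $\mathbf{BK}$ clauses are parametric in $R$, the validity of every axiom schema and the soundness of both inference rules follow without invoking any structural property of the regular program operators $;, \cup, {}^{\ast}, ?$, and in particular without distinguishing between standard and non-standard dynamic Odintsov--Wansing models.
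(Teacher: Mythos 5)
Your proposal is correct and matches what the paper intends: the paper states this lemma without proof precisely because the clauses for $[\alpha]$ and $\langle\alpha\rangle$ are the $\mathbf{BK}$ clauses for $\B$ and $\D$ with $R(\alpha)$ in place of $R$, so the $\mathbf{BK}$ soundness verifications (and the closure under Modus ponens and Necessitation) transfer verbatim, uniformly in $\alpha$ and without any appeal to the structure of the program operators. Your explicit remark that the verification is parametric in the truth/falsity sets of the subformulas is the right way to handle the fact that those subformulas may themselves contain other program modalities, so nothing is missing.
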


\begin{lemma}\label{lem: pdl axioms valid} The following schemata are valid in every (standard or non-standard) model:
\begin{enumerate}
\item $[\alpha \cup \beta]\phi \leftrightarrow \left([\alpha] \phi \land [\beta]\phi\right)$ and $\langle \alpha \cup \beta \rangle \phi \leftrightarrow \left(\langle \alpha \rangle \phi \lor \langle \beta \rangle \phi\right)$
\item $[\alpha;\beta]\phi \leftrightarrow [\alpha][\beta]\phi$ and $\langle \alpha;\beta\rangle\phi \leftrightarrow \langle \alpha \rangle \langle \beta \rangle \phi$
\item $[\psi?]\phi \leftrightarrow \left( \psi \to \phi \right)$ and $\langle \psi? \rangle \phi \leftrightarrow \left( \psi \land \phi \right)$
\item $[\alpha^{\ast}]\phi \leftrightarrow \left( \phi \land [\alpha][\alpha^{\ast}]\phi\right)$ and $\langle \alpha^{\ast}\rangle\phi \leftrightarrow \left( \phi \lor \langle\alpha\rangle\langle\alpha^{\ast}\rangle\phi\right)$
\item $\left( \phi \land [\alpha^{\ast}](\phi \to [\alpha]\phi)\right) \to [\alpha^{\ast}]\phi$ and $\langle\alpha^{\ast}\rangle\phi \to \left(\phi \lor \langle\alpha^{\ast}\rangle(\neg \phi \land \langle\alpha\rangle \phi)\right)$
\end{enumerate}
\end{lemma}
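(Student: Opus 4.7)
The plan is to handle the five items in turn, exploiting the fact that validity of a biconditional $\theta \leftrightarrow \chi$ in $\mathbf{BPDL}$ reduces, via the clauses for $\to$ and $\land$, to the identity of truth sets $|\theta|^{+} = |\chi|^{+}$ in every model, and validity of an implication $\theta \to \chi$ to the inclusion $|\theta|^{+} \subseteq |\chi|^{+}$. Thus I never need to check behaviour of $\models^{-}$, which is what makes the four-valued setting no harder than the classical one here.

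Items 1--3 are immediate unfoldings. From the semantic clauses for $[\alpha]$ and $\langle \alpha \rangle$ and the identities $R_{\alpha \cup \beta} = R_\alpha \cup R_\beta$, $R_{\alpha;\beta} = R_\alpha \circ R_\beta$ and $R_{\psi?} = \{(x,x) : x \models^{+} \psi\}$, the required truth-set equalities hold in both standard and non-standard models. For instance, $x \models^{+} [\psi?]\phi$ iff every $R_{\psi?}$-successor verifies $\phi$, iff either $x \not\models^{+} \psi$ or $x \models^{+} \phi$, which is the clause for $\psi \to \phi$.

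For items 4--5, the non-standard case is literally built into the definition: equations \eqref{eq: iteration 1} and (3) of the definition give item 4, and inclusions (2) and \eqref{ewq: iteration 4} give item 5. It remains to show that standard models satisfy the same four conditions. Equation \eqref{eq: iteration 1} follows from $R_\alpha^{*} = \mathrm{id} \cup R_\alpha \circ R_\alpha^{*}$: the $\subseteq$ direction uses reflexivity at $y = x$ for the conjunct $\phi$, and the $\supseteq$ direction factors an arbitrary $R_\alpha^{*}$-path as either empty or one $R_\alpha$-step followed by an $R_\alpha^{*}$-path. Equation (3) is dual. For inclusion (2), assume $x \models^{+} \phi \land [\alpha^{*}](\phi \to [\alpha]\phi)$ and let $R_\alpha^{*}xy$; I would prove $y \models^{+} \phi$ by induction on the length $n$ of an $R_\alpha$-path from $x$ to $y$, using the first conjunct in the base case and applying the second conjunct to the penultimate state in the induction step. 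For inclusion \eqref{ewq: iteration 4}, given $x \models^{+} \langle \alpha^{*} \rangle \phi$, pick a shortest $R_\alpha$-path from $x$ to some $\phi$-verifying state $y$; if the path is empty then $x \models^{+} \phi$, otherwise its penultimate state $z$ satisfies $z \not\models^{+} \phi$ by minimality and $z \models^{+} \langle \alpha \rangle \phi$ via the last edge, giving $x \models^{+} \langle \alpha^{*}\rangle(\neg \phi \land \langle \alpha \rangle \phi)$.

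The main subtlety, rather than any genuine obstacle, is the step from $z \not\models^{+} \phi$ to $z \models^{+} \neg \phi$ in the last argument: this is sound only because $\neg \phi$ abbreviates $\phi \to \bot$, whose $\models^{+}$ clause collapses to the complement of $\models^{+} \phi$. One must resist the temptation to use $\sneg$-contraposition or Replacement, neither of which is available in $\mathbf{BK}$; but since none of the five schemas even contains $\sneg$, this caveat never actually bites.
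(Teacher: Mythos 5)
Your proof is correct and takes essentially the same route as the paper, which notes that since validity only concerns $\models^{+}$ the arguments are ``virtually identical'' to the classical $\mathbf{PDL}$ ones (with items 4--5 holding by definition in non-standard models) and works out just one sample case. Your explicit observation that $\neg\phi = \phi \to \bot$ behaves classically on $\models^{+}$ is precisely the point that licenses this transfer.
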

\begin{proof}
The proofs are virtually identical to arguments used in the context of standard $\mathbf{PDL}$ \cite{Harel2000}. As an example, we show that $[\alpha^{\ast}]\phi \to \left( \phi \land [\alpha][\alpha^{\ast}]\phi\right)$ is valid. The validity of $[\alpha^{\ast}]\phi \to \phi$ follows from the fact that $R(\alpha^{\ast})$ is reflexive. Now if $x \not\models^{+} [\alpha][\alpha^{\ast}]\phi$, then there are $y,z$ such that $R(\alpha)xy$, $R(\alpha^{\ast})yz$ and $z \not\models^{+} \phi$. But obviously $R(\alpha^{\ast})xz$, so $x \not\models^{+} [\alpha^{\ast}]\phi$.
\end{proof}

It is plain that compactness fails for $\mathbf{BPDL}$ for the same reason as for $\mathbf{PDL}$ \cite[181]{Harel2000}. Every finite subset of \[ M = \{\langle \alpha^{\ast} \rangle \phi \} \cup \{ \neg \phi \} \cup \{ \neg \langle \alpha^{n} \rangle \phi \mid n \in \omega \}\] is satisfiable, but $M$ itself is not ($\alpha^{n} = \underbrace{\alpha; \ldots ; \alpha}_{n\text{ times}}$).

Examples \ref{exam: default} -- \ref{exam: inc removal} suggest that some $\mathcal{L}_{dyn}$-formulas can be seen as definitions of specific features of programs ($\alpha$ represents a default rule, $\alpha$ removes inconsistency in the information about a specific formula, etc.). Global consequence is a natural notion here. If $X$ is a set of such definitions, then $X \models^{g} \phi$ iff $\phi$ is valid in every model that respects the definitions `globally'. In other words, $\phi$ is a consequence of the assumption that the definitions in $X$ are satisfied in every possible state. Similarly as in the case of $\mathbf{PDL}$ (see \cite[209]{Harel2000}, global consequence for finite $X$ corresponds to validity of specific formulas.

\begin{proposition}
Let $\{a_1, \ldots, a_n \}$ be the set of all atomic programs appearing in some formula in (finite) $X$ or in $\phi$. Then \[X \models^{g} \phi \iff \models [(a_1 \cup \ldots \cup a_n)^{\ast}]\bigwedge X \to \phi \]
\end{proposition}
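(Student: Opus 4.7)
Write $\alpha := a_1 \cup \dots \cup a_n$, so that in any standard model $R(\alpha^*) = (R(a_1) \cup \dots \cup R(a_n))^*$. The plan is to prove the two implications separately; ($\Leftarrow$) is immediate, and the substantial work lies in a generated-submodel argument for ($\Rightarrow$).

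For ($\Leftarrow$): assume $\models [\alpha^*]\bigwedge X \to \phi$, and let $\mathcal{M}$ be a standard model in which every $\psi \in X$ is valid. Fix a state $x$. Every $y$ with $R(\alpha^*) x y$ satisfies each $\psi \in X$ by hypothesis, hence $y \models^+ \bigwedge X$; so $x \models^+ [\alpha^*]\bigwedge X$, and the assumed validity yields $x \models^+ \phi$.

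For ($\Rightarrow$): I argue contrapositively. Suppose $\mathcal{M} = \langle S, R, V^+, V^- \rangle$ and $x \in S$ satisfy $x \models^+ [\alpha^*] \bigwedge X$ but $x \not\models^+ \phi$. Set $S' := \{y \in S : R(\alpha^*) x y\}$ and define $\mathcal{M}' = \langle S', R', V'^+, V'^- \rangle$ by restricting each $V^\circ$ to $S'$, setting $R'(a_i) := R(a_i) \cap (S' \times S')$ for $i \le n$ and $R'(a) := \emptyset$ for the remaining atomic programs, and extending $R'$ to complex programs by the clauses imposed on a standard model (so $R'(\psi?)$ is the identity on $|\psi|^+_{\mathcal{M}'}$, evaluated in $\mathcal{M}'$). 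Since $R(a_i) \subseteq R(\alpha)$ for each $i$, the set $S'$ is closed under every $R(a_i)$.

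The crux is a truth-preservation lemma, proved by simultaneous induction on programs and formulas whose atomic programs all lie in $\{a_1, \dots, a_n\}$: (i) $S'$ is closed under $R(\beta)$, (ii) $R'(\beta) = R(\beta) \cap (S' \times S')$, and (iii) for every $y \in S'$ and $\circ \in \{+,-\}$, $y \models^\circ_{\mathcal{M}'} \psi \iff y \models^\circ_\mathcal{M} \psi$. The propositional and strong-negation cases reduce immediately to the inductive hypothesis; the union and composition cases are direct calculations on restricted relations; the iteration case uses closure of $S'$ under $R(\beta)$ to conclude $(R(\beta) \cap (S' \times S'))^* = R(\beta)^* \cap (S' \times S')$; and all modal clauses for $[\beta]$ and $\langle \beta \rangle$ (both support and falsity) go through because every $R(\beta)$-successor of a point in $S'$ again lies in $S'$. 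The main technical obstacle is the interleaving of program and formula induction forced by the test operator, handled by a joint induction with a suitable well-founded measure on program/formula pairs. Given the lemma, every $y \in S'$ satisfies each $\psi \in X$ in $\mathcal{M}'$ (by the hypothesis on $x$ together with truth-preservation), so each $\psi \in X$ is valid in $\mathcal{M}'$ while $x \not\models^+_{\mathcal{M}'} \phi$, witnessing $X \not\models^g \phi$ and completing the contrapositive.
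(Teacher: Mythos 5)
Your proposal is correct and follows essentially the same route as the paper: the left-to-right direction is handled contrapositively by passing to the submodel generated by the states $(a_1\cup\ldots\cup a_n)^{\ast}$-reachable from the falsifying state, with a truth/relation-preservation lemma proved by induction on expressions (the paper states only the key relational inclusion and leaves the rest implicit, while you spell out the mutual program/formula induction forced by tests). Your choice to empty the relations of the remaining atomic programs rather than restrict them is an immaterial variant, since those programs occur neither in $X$ nor in $\phi$.
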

\begin{proof}
The right-to-left implication is trivial. The converse implication is established as follows.  If $[(a_1 \cup \ldots \cup a_n)^{\ast}]\bigwedge X \to \phi$ is not valid (the antecedent of this implication is abbreviated as $X^{\ast}$), then there is a state $x$ of a model $\mathcal{M}$ such that $x \models^{+} X^{\ast} \land \neg \phi$. Define $\mathcal{M}_x$ by setting $S_x = \{ y \mid \langle x,y \rangle \in R((a_1 \cup \ldots \cup a_n)^{\ast}) \}$ and taking $R_x, V^{+}_x$ and $V^{-}_x$ to be restrictions of the original $R, V^{+}, V^{-}$ to $S_x$.  It is plain that $\bigwedge X$ is valid in $\mathcal{M}_x$, but $\phi$ is not (the key fact, easily established by induction on the complexity of $\alpha$, is that if every atomic program appearing in $\alpha$ is in $\{a_1 \cup \ldots \cup a_n \}$, then $R(\alpha)zz'$ only if $R_x(\alpha)zz'$, for all $z,z' \in S_x$). Hence, $X \not\models^{g} \phi$. 
\end{proof}

\section{Decidability}\label{sec: decidability}
In this section we establish decidability of the satisfiability problem of $\mathcal{L}_{dyn}$ formulas in (standard and non-standard) dynamic Odintsov--Wansing models. We modify the standard technique using filtration trough the Fischer--Ladner closure of a formula. Our definition of the Fisher--Ladner closure is a simplified version of the definition used in \cite{Harel2000}.
\begin{definition}
The \emph{Fisher-Ladner closure} of $\phi$, $FL(\phi)$, is the smallest set of formulas such that
	\begin{itemize}
	\item $\phi \in FL(\phi)$ and $FL(\phi)$ is closed under subformulas;
	\item if $[\psi?]\chi \in FL(\phi)$, then $\psi \in FL(\phi)$;
	\item if $[\alpha \cup \beta]\chi \in FL(\phi)$, then $[\alpha]\chi \in FL(\phi)$ and $[\beta]\chi \in FL(\phi)$;
	\item if $[\alpha; \beta]\chi \in FL(\phi)$, then $[\alpha][\beta]\chi \in FL(\phi)$;
	\item if $[\alpha^{\ast}]\chi \in FL(\phi)$, then $[\alpha][\alpha^{\ast}]\chi \in FL(\phi)$;
	\item variants of the above conditions with all `$[\cdot]$' replaced by `$\langle \cdot \rangle$'.
	\end{itemize}
\end{definition} 

\begin{lemma}
For all $\phi$, $FL(\phi)$ is finite.
\end{lemma}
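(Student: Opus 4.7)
The plan is to exhibit, for each formula $\phi$, a finite set $C(\phi)$ containing $\phi$, closed under subformulas, and closed under the Fischer--Ladner expansion rules; since $FL(\phi)$ is by definition the least such set, this yields $FL(\phi) \subseteq C(\phi)$ and hence finiteness of $FL(\phi)$.

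The construction is by mutual structural recursion. For non-modal $\phi$, $C(\phi)$ is built by unioning $\{\phi\}$ with the $C$-sets of its immediate subformulas; no expansion rule fires at such $\phi$. For modal $\phi = [\gamma]\chi$ (the $\langle\gamma\rangle\chi$ case is analogous) I introduce an auxiliary family $D(\gamma,\theta)$, defined by recursion on the outer program $\gamma$ with the inner formula $\theta$ treated as a parameter:
\begin{align*}
D(a,\theta) &= \{[a]\theta\} \cup C(\theta),\\
D(\alpha\cup\beta,\theta) &= \{[\alpha\cup\beta]\theta\} \cup D(\alpha,\theta) \cup D(\beta,\theta),\\
D(\alpha;\beta,\theta) &= \{[\alpha;\beta]\theta\} \cup D(\alpha,[\beta]\theta) \cup D(\beta,\theta),\\
D(\alpha^{\ast},\theta) &= \{[\alpha^{\ast}]\theta\} \cup D(\alpha,[\alpha^{\ast}]\theta),\\
D(\psi?,\theta) &= \{[\psi?]\theta\} \cup C(\psi) \cup C(\theta),
\end{align*}
and I set $C([\gamma]\chi) := D(\gamma,\chi)$. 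Each clause of $D$ directly realizes one of the FL-expansion rules, so once finiteness is secured, closure of $C(\phi)$ under those rules is immediate by inspection.

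The main obstacle is to show that this mutual recursion really produces a finite set. The iteration clause $D(\alpha^{\ast},\theta) = \{[\alpha^{\ast}]\theta\}\cup D(\alpha,[\alpha^{\ast}]\theta)$ strictly grows the parameter even as it shrinks the outer program, and a naive unfolding reaches $D(a,[\alpha^{\ast}]\theta)\supseteq C([\alpha^{\ast}]\theta) = D(\alpha^{\ast},\theta)$, threatening a circular regress. I handle this by interpreting the mutual recursion as a least-fixed-point computation on a manifestly finite pool of candidate formulas, namely those of the form $[\beta_1]\cdots[\beta_k]\chi'$ with each $\beta_i$ a subprogram of some program occurring in $\phi$, $\chi'$ a subformula of $\phi$, and $k$ bounded by the structural size of $\phi$. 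The bound on $k$ is the crucial technical point: new prefixes arise only from the composition and iteration clauses, and along any branch of the recursion each such addition is accompanied by a strict decrease in the structural size of the outer program. Since $\phi$ has only finitely many subprograms and subformulas, the candidate pool is finite, the fixed point is attained after finitely many iterations, and $C(\phi)$ is finite as required.
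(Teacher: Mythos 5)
Your strategy---exhibit a finite set containing $\phi$ and closed under subformulas and the expansion rules, then use minimality of $FL(\phi)$---is the right one, and it is in essence the standard argument; the paper itself offers no details and simply cites Goldblatt. But your implementation differs from the standard set-up in one important way. In the Goldblatt/Harel--Kozen--Tiuryn version, the auxiliary program-decomposition function does \emph{not} reopen the closure of its formula parameter when the program bottoms out: the closure of the inner formula is added once, at the top level, so the mutual recursion is well founded on the subexpression order, finiteness is immediate by structural induction, and one even gets the sharper cardinality bound $|FL(\phi)|\leq|\phi|$ with no fixed-point machinery. Your clause $D(a,\theta)=\{[a]\theta\}\cup C(\theta)$ is precisely what creates the circular call $D(\alpha^{\ast},\theta)\rightsquigarrow C([\alpha^{\ast}]\theta)=D(\alpha^{\ast},\theta)$ that you then defuse by computing a least fixed point inside a finite candidate pool.

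That device can be made to work, but it shifts the entire burden of the proof onto the finiteness of the pool, i.e.\ onto the bound on $k$, and there your justification is too quick. The observation that ``each prefix addition is accompanied by a strict decrease in the size of the outer program'' is only local: when the recursion bottoms out and restarts via $C(\theta)$, the outer program is reset to the topmost prefix $\beta_1$ of $\theta$ (possibly $\alpha^{\ast}$ itself), so the size of the outer program is not a globally decreasing measure and prefix additions can recur after restarts; a naive potential-function reading of your sentence fails exactly at the star clause, where the parameter gains the large prefix $[\alpha^{\ast}]$ while the outer program shrinks only from $\alpha^{\ast}$ to $\alpha$. The bound on $k$ does hold, but for a finer reason that you should state: every restart call $C([\beta_1]\cdots[\beta_m]\chi')=D(\beta_1,[\beta_2]\cdots[\beta_m]\chi')$ duplicates a call already generated earlier in the computation (your noted circularity is the special case $\beta_1=\alpha^{\ast}$, but the same holds for every pushed prefix), or equivalently the prefix stack is always a leftmost-decomposition stack of a single program occurring in $\phi$, whence $k\leq 1$ plus the number of composition and iteration operators in that program. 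With that invariant made explicit your pool is indeed finite and the argument closes; without it, the finiteness of the pool---the crux of your proof---is asserted rather than proved. If you want the shortest fully rigorous route, adopt the standard two-function recursion and drop the fixed-point machinery altogether.
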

\begin{proof}
Standard argument, see \cite{Gold92}.
\end{proof}

\begin{definition}
Let $T$ be a set of formulas and $\mathcal{M}$ a (standard or non-standard) model with $x,y \in S$. Let $x \equiv_T y$ iff, for all $\phi \in T$,
\begin{align*}
x \models^{+}_\mathcal{M} \phi &\iff y \models^{+}_\mathcal{M} \phi\\
x \models^{-}_\mathcal{M} \phi &\iff y \models^{-}_\mathcal{M} \phi.
\end{align*}
Let $[x]_T = \{ y \mid x \equiv_T y \}$. The \emph{filtration of $\mathcal{M}$ trough $T$} is $\mathcal{M}_T = \langle S_T, R_T, V^{+}_T, V^{-}_T \rangle$, where
\begin{enumerate}
\item $S_T = \{ [x]_T \mid x \in S \}$;
\item $R_T(a) = \{ \langle [x]_T, [y]_T \rangle \mid R_a xy \}$ for all $a \in AP$;
\item $V_T^{+}(p) = \{ [x]_T \mid x \in V^{+}(p) \} $;
\item $V_T^{-}(p) = \{ [x]_T \mid x \in V^{-}(p) \} $.
\end{enumerate} 
Relations $\models^{+}_{\mathcal{M}_T}$, $\models^{-}_{\mathcal{M}_T}$ and $R_T(\alpha)$ for complex $\alpha$ are defined as in standard models.
\end{definition}

\noindent It is plain that $\mathcal{M}_T$ is a standard model. We write $[x]$ instead of $[x]_T$ if $T$ is clear from the context.

\begin{lemma}\label{lem: filtration cardinality}
For all $\mathcal{M}_T$, $|S_T| \leq 4^{|T|}$.
\end{lemma}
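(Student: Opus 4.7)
The plan is to argue that each equivalence class in $S_T$ is coded injectively by a pair of subsets of $T$, and hence $|S_T|$ is bounded by the number of such pairs. Concretely, for each state $x \in S$ define the two \emph{characteristic sets}
\[
T_x^{+} = \{ \phi \in T \mid x \models^{+}_\mathcal{M} \phi \}, \qquad T_x^{-} = \{ \phi \in T \mid x \models^{-}_\mathcal{M} \phi \}.
\]
By the very definition of $\equiv_T$, we have $x \equiv_T y$ iff $T_x^{+} = T_y^{+}$ and $T_x^{-} = T_y^{-}$. Hence the assignment
\[
f \colon S_T \longrightarrow 2^T \times 2^T, \qquad f([x]_T) = \langle T_x^{+}, T_x^{-} \rangle
\]
is well-defined and injective.

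Since $|2^T \times 2^T| = 2^{|T|} \cdot 2^{|T|} = 4^{|T|}$, we conclude $|S_T| \leq 4^{|T|}$, as required.

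There is essentially no obstacle here: the bound $4^{|T|}$ rather than the classical $2^{|T|}$ is precisely the artefact of the Belnap--Dunn setting, where each formula carries two independent bits of information at each state (its $\models^{+}$-value and its $\models^{-}$-value), so the coding naturally lands in $2^T \times 2^T$ instead of $2^T$. The only thing worth emphasising is that the injectivity uses the full definition of $\equiv_T$ (both $\models^{+}$- and $\models^{-}$-agreement); restricting to $\models^{+}$ would not suffice because distinct states can agree on truth-sets while disagreeing on falsity-sets.
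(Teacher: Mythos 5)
Your proof is correct and is essentially the paper's argument spelled out: the paper simply notes that each member of $T$ can take four truth values at a state, which is exactly your coding of each class $[x]_T$ by the pair $\langle T_x^{+}, T_x^{-}\rangle \in 2^T \times 2^T$, yielding the bound $4^{|T|}$.
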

\begin{proof}
There are four possible truth values of each member of $T$.
\end{proof}

\begin{lemma}[Filtration Lemma]\label{lem: filtration} Let $\mathcal{M}$ be a (standard or non-standard) model and $\phi$ a formula.
\begin{enumerate}
\item If $[\alpha]\psi \in FL(\phi)$ or $\langle \alpha \rangle \psi \in FL(\phi)$, then $R(\alpha) xy$ only if $R_{FL(\phi)}(\alpha)[x][y]$;
\item If $[\alpha]\psi \in FL(\phi)$, then $R_{FL(\phi)}(\alpha)[x][y]$ and $x \models^{+} [\alpha]\psi$ only if $y \models^{+} \psi$;
\item If $\langle\alpha\rangle\psi \in FL(\phi)$, then $R_{FL(\phi)}(\alpha)[x][y]$ and $y \models^{+}\psi$ only if $x \models^{+} \langle \alpha\rangle \psi$;
\item If $\langle\alpha\rangle\psi \in FL(\phi)$, then $R_{FL(\phi)}(\alpha)[x][y]$ and $x \models^{-} \langle\alpha\rangle\psi$ only if $y \models^{-} \psi$;
\item If $[\alpha]\psi \in FL(\phi)$, then $R_{FL(\phi)}(\alpha)[x][y]$ and $y \models^{-} \psi$ only if $x \models^{-} [\alpha] \psi$;
\item If $\psi \in FL(\phi)$, then $x \models^{+} \psi$ iff $[x] \models^{+} \psi$;
\item If $\psi \in FL(\phi)$, then $x \models^{-} \psi$ iff $[x] \models^{-} \psi$.
\end{enumerate}
\end{lemma}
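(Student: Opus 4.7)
The plan is to prove the seven clauses by a double induction: clauses (1)--(5) by induction on the structure of the program $\alpha$, and clauses (6)--(7) by induction on the structure of the formula $\psi$. The two inductions interlock in a well-founded way. The test case $\alpha = \chi?$ in (1)--(5) invokes (6)--(7) applied to the strictly simpler test formula $\chi$ (which belongs to $FL(\phi)$ by the closure rule for tests); conversely, the modal cases $\psi = [\alpha]\chi$ and $\psi = \langle\alpha\rangle\chi$ in (6)--(7) invoke clauses (1)--(5) for $\alpha$ together with the induction hypothesis on the subformula $\chi$.

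For the $\psi$-induction, the atomic, constant, and Boolean cases follow directly from the definitions of $V^+_T$, $V^-_T$ and the compositional clauses of $\models^+_{\mathcal{M}_T}$, $\models^-_{\mathcal{M}_T}$, using that $\equiv_T$ preserves truth and falsity for every formula in $FL(\phi)$. In the modal cases, each direction of (6) and (7) is obtained by combining clause (1) with one of (2)--(5). For example, the reverse direction of (6) for $\psi = [\alpha]\chi$ uses (1) to lift $R(\alpha)xy$ to $R_T(\alpha)[x][y]$ and then the induction hypothesis on $\chi$; the forward direction uses (2) to transport $x \models^+ [\alpha]\chi$ together with $R_T(\alpha)[x][y]$ to $y \models^+ \chi$ before applying the hypothesis. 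Diamond and falsification cases are handled symmetrically via (1), (3), (4), and (5).

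For the $\alpha$-induction of (1)--(5), the atomic case is immediate from the definition of $R_T(a)$ and $\equiv_T$-preservation applied to the formulas $[a]\psi$, $\langle a\rangle\psi$, and $\psi$ in $FL(\phi)$. The composition and union cases reduce routinely to the induction hypothesis via the corresponding $FL$-closure rules, and the test case $\alpha = \chi?$ uses the fact that $R_T(\chi?)$ is the identity on $|\chi|^+_{\mathcal{M}_T}$ together with clause (6) applied to the smaller formula $\chi$. For the iteration case $\alpha = \beta^*$ in clauses (2)--(5), I would proceed by induction on the length of a connecting $R_T(\beta)^*$-path, applying at each step the induction hypothesis for $\beta$ to the unfolded formula $[\beta][\beta^*]\psi$ or $\langle\beta\rangle\langle\beta^*\rangle\psi$ (both in $FL(\phi)$ by the closure rule for $\ast$); the fixpoint identities $|[\beta^*]\psi|^+ = |\psi \land [\beta][\beta^*]\psi|^+$ and $|\langle\beta^*\rangle\psi|^+ = |\psi \lor \langle\beta\rangle\langle\beta^*\rangle\psi|^+$, which hold in both standard and non-standard models, then close the argument.

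The main obstacle will be clause (1) for $\alpha = \beta^*$ in the non-standard setting, where $R(\beta^*)$ may properly extend $R(\beta)^*$ and a graph-theoretic path-tracing argument on $R(\beta)^*$ alone does not account for the extra pairs. I would expect the argument here to use the non-standard induction axioms governing $R(\beta^*)$ together with the Belnapian De Morgan dualities $\sneg[\alpha]\phi \leftrightarrow \langle\alpha\rangle\sneg\phi$ and $\sneg\langle\alpha\rangle\phi \leftrightarrow [\alpha]\sneg\phi$ from Lemma \ref{lem: bk axioms valid}, recasting falsification statements as truth-preservation statements about strongly negated formulas and forcing every extra pair in $R(\beta^*)$ to be reflected by an $R_T(\beta)^*$-chain in the filtration via the fixpoint constraints imposed on $R(\beta^*)$.
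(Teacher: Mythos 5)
Your overall architecture (interlocking induction on programs for clauses (1)--(5) and on formulas for (6)--(7), with the test case calling (6) and the modal formula cases calling (1)--(5), plus the path-induction on $R_{FL(\phi)}(\beta)$-chains for the iteration instances of (2)--(5) using the fixpoint identities) matches the paper's proof of the standard-model half, and those parts are fine. The genuine gap is exactly at the point you flag but do not resolve: clause (1) for $\alpha=\beta^{\ast}$ when $\mathcal{M}$ is non-standard. Saying that the induction axiom and the fixpoint constraints will ``force every extra pair in $R(\beta^{\ast})$ to be reflected by an $R_{FL(\phi)}(\beta)^{\ast}$-chain'' is a restatement of the goal, not an argument; the fixpoint equations by themselves carry no path information about the extra pairs in $R(\beta^{\ast})\setminus R(\beta)^{\ast}$. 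The missing idea is a definability argument inside $\mathcal{M}$: let $E=\{z\mid \langle [x],[z]\rangle\in R_{FL(\phi)}(\beta^{\ast})\}$, build for each filtration class a finite characteristic formula (a conjunction of members of $FL(\phi)$ true at a representative and weak negations $\neg\chi$ of those not true), let $\psi_E$ be the disjunction of these over classes in $E$, show via the induction hypothesis for $\beta$ that $E$ is closed under $R_\beta$, so $\psi_E\to[\beta]\psi_E$ is valid in $\mathcal{M}$; then Necessitation gives $[\beta^{\ast}](\psi_E\to[\beta]\psi_E)$ valid, and since $x\models^{+}\psi_E$, the semantic induction principle assumed of non-standard models yields $x\models^{+}[\beta^{\ast}]\psi_E$, hence every $y$ with $R(\beta^{\ast})xy$ satisfies $\psi_E$ and so lies in $E$. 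Without constructing $\psi_E$ and proving that it defines $E$, the non-standard case is not established.

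A secondary point: the four-valued subtlety you anticipate is real but located elsewhere than you suggest. The De Morgan laws $\sneg[\alpha]\phi\leftrightarrow\langle\alpha\rangle\sneg\phi$ etc.\ are used in the falsification clauses (iv), (v), (vii) of the standard part, not in the non-standard iteration case; what the four-valued setting actually demands there is that the characteristic formulas capture agreement on $\models^{-}$ as well as $\models^{+}$ (since $\equiv_{FL(\phi)}$ is defined by both), which is arranged by recording the truth or non-truth of strongly negated formulas (e.g.\ including $\neg\sneg\theta$ when $\sneg\theta$ is not verified), so that a disagreement in falsity shows up as a disagreement in truth on $\sneg\theta$. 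Your proposal should make that adjustment explicit when defining the class-characterizing formulas.
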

\begin{proof}
A simple but tedious variation of the standard proof using simultaneous induction on the subexpression relation \cite{Gold92,Harel2000}. Details of some of the steps are given in Appendix \ref{sec: appendix}.
\end{proof}

\begin{theorem}
The satisfiability problem for $\mathbf{BPDL}$ is decidable.
\end{theorem}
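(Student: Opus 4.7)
The plan is to combine the Filtration Lemma with the cardinality bound to establish a small model property, and then reduce satisfiability to an enumeration plus model-checking over finite models. Given an input formula $\phi$, suppose first that $\phi$ is satisfiable, so there is a standard dynamic Odintsov--Wansing model $\mathcal{M}$ and a state $x$ with $x \models_\mathcal{M}^{+} \phi$. I would form the filtration $\mathcal{M}_{FL(\phi)}$; this is a standard model (as noted immediately after the filtration definition), and by Lemma \ref{lem: filtration cardinality} its carrier has cardinality at most $4^{|FL(\phi)|}$. Item 6 of Lemma \ref{lem: filtration} then yields $[x] \models^{+}_{\mathcal{M}_{FL(\phi)}} \phi$. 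Conversely, any standard model satisfying $\phi$ directly witnesses its satisfiability. Hence $\phi$ is satisfiable if and only if it is positively supported at some state of some standard model whose carrier has cardinality at most $4^{|FL(\phi)|}$, a bound effectively computable from $\phi$.

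It remains to observe that this bounded finite-model question is effectively decidable. Only the atomic programs and atomic formulas occurring in $\phi$ are semantically relevant, and both are finite in number, so up to isomorphism there are only finitely many candidate standard models of size at most $4^{|FL(\phi)|}$ over this signature. For each such candidate $\mathcal{N}$, one computes $R_\mathcal{N}(\alpha)$ for every subprogram $\alpha$ of $\phi$ by simultaneous recursion on the subexpression order: union and composition are finite set operations, reflexive transitive closure on a finite carrier is computable, and $R_\mathcal{N}(\psi?)$ is the identity on $|\psi|^{+}_\mathcal{N}$, computed alongside. The four-valued truth and falsity of $\phi$ at each state of $\mathcal{N}$ then follow by straightforward recursion on the semantic clauses. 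The decision procedure enumerates the finitely many candidate models, performs this computation in each, and returns \emph{satisfiable} iff some candidate positively supports $\phi$ at some state; otherwise it returns \emph{unsatisfiable}.

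Because the heavy lifting has already been done in Lemma \ref{lem: filtration}, no substantial new work is required; the only delicate point is ensuring well-foundedness of the simultaneous recursion that builds up $R_\mathcal{N}(\alpha)$ and $|\psi|^{+}_\mathcal{N}$ when $\alpha$ contains tests $\psi?$ whose guards themselves contain programs. This is handled by induction on subexpression depth exactly as in the classical $\mathbf{PDL}$ argument \cite{Harel2000} and is already implicitly used in the definition of $R_T$ on the filtration itself, so I expect no genuine obstacle beyond careful bookkeeping.
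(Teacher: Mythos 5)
Your proposal is correct and follows essentially the same route as the paper: filtration through $FL(\phi)$ plus the $4^{|FL(\phi)|}$ cardinality bound gives the small model property, and satisfiability is then decided by enumerating the finitely many bounded-size standard models and model-checking each. The extra detail you give about computing $R(\alpha)$ and the truth/falsity sets by recursion on subexpressions is fine but is just an elaboration of the paper's ``naive satisfiability algorithm''.
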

\begin{proof}
Standard argument. If $\phi$ is satisfiable in some $\mathcal{M}$, then, by Lemmas \ref{lem: filtration cardinality} and \ref{lem: filtration}(vi), $\phi$ is satisfiable in a standard model of size at most $4^{k}$ where $k = |FL(\phi)|$. There is a finite number of such models, so a naive satisfiability algorithm is to determine $k = |FL(\phi)|$ and check all models of size $4^{k}$. 
\end{proof}

\section{Completeness}\label{sec: completeness}
The axiom system $H(\mathbf{BPDL})$ results from $H(\mathbf{BK})$ by replacing all `$\B$' by `$[\alpha]$' and all `$\D$' by `$\langle \alpha \rangle$' and adding the schemata explicitly stated in Lemma \ref{lem: pdl axioms valid}. (See Appendix \ref{sec: appendix b}.) The notion of a maximal $H(\mathbf{BPDL})$-consistent set (m.c. set) of formulas is defined as usual ($X$ is consistent iff $\neg \bigwedge X'$ is not provable for all finite $X' \subseteq X$; $X$ is m.c. iff $X$ is consistent all $X' \supset X$ are inconsistent). Hence, m.c. sets have all the usual properties. 

\begin{definition}
The canonical model $\mathcal{M}_c = \langle S_c, R_c, V^{+}_c, V^{-}_c \rangle$ is a quadruple such that
\begin{enumerate}
 \item $S_c$ is the set of all m.c. sets;
 \item $R_c(\alpha)XY$ iff for all $[\alpha]\phi \in X$, $\phi \in Y$ (iff for all $\phi \in Y$, $\langle \alpha \rangle \phi \in X$);
 \item $V^{+}_c(p) = \{ X \mid p \in X \}$; 
  \item $V^{-}_c(p) = \{ X \mid \sneg p \in X \}$;
 \end{enumerate}
$|\phi|^{+}_c = \{ X \mid \phi \in X \}$ and $|\phi|^{-}_c = \{ X \mid \sneg \phi \in X \}$.
\end{definition}

\begin{lemma}
$|\phi|^{+}_c$ and $|\phi|^{-}_c$ behave like $|\phi|^{+}$ and $|\phi|^{-}$ (in standard and non-standard models), respectively:
\begin{itemize}
\item $X \in |p|^{+}_c$ iff $X \in V^{+}_c(p)$; $X \in |p|^{-}_c$ iff $ X \in V^{-}_c(p)$;
\item $|\bot|^{+}_c = \emptyset$; $|\bot|^{-}_c = S_c$;
\item $|\sneg \phi|^{+}_c = |\phi|^{-}_c$; $|\sneg \phi|^{-}_c = |\phi |^{+}_c$;
\item $|\phi \land \psi |^{+}_c = |\phi|^{+}_c \cap |\psi|^{+}_c$; $|\phi \land \psi|^{-}_c = |\phi|^{-}_c \cup |\psi|^{-}_c$; 
\item $|\phi \lor \psi |^{+}_c = |\phi|^{+}_c \cup |\psi|^{+}_c$; $|\phi \land \psi|^{-}_c = |\phi|^{-}_c \cap |\psi|^{-}_c$;
\item $|\phi \to \psi|^{+}_c = (S_c - |\phi|^{+}_c) \cup |\psi|^{+}_c$; $|\phi \to \psi|^{-}_c = |\phi|^{+}_c \cap |\psi|^{-}_c$;
\item $|[\alpha]\phi|^{+}_c = \{ X \mid (\forall Y)(\text{if }R_c(\alpha)XY,\text{then } Y \in |\phi|^{+}_c ) \}$;\\ $|[\alpha]\phi|^{-}_c = \{X \mid (\exists Y)(R_c(\alpha)XY \text{and } Y \in |\phi|^{-}_c)\}$;
\item $|\langle \alpha \rangle\phi|^{+}_c = \{X \mid (\exists Y)(R_c(\alpha)XY \text{and } Y \in |\phi|^{+}_c)\}$;\\ $|\langle \alpha \rangle \phi|^{-}_c = \{ X \mid (\forall Y)(\text{if }R_c(\alpha)XY,\text{then } Y \in |\phi|^{-}_c ) \}$.
\end{itemize}
\end{lemma}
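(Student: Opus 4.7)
I would prove this by simultaneous induction on the complexity of $\phi$, establishing both the $|\phi|^+_c$ and $|\phi|^-_c$ clauses at each step. Before starting the induction, I would first verify that the two stated formulations of $R_c(\alpha)$ are equivalent: given one direction, if $\phi \in Y$ but $\langle \alpha \rangle \phi \notin X$, then by maximality $\neg \langle \alpha \rangle \phi \in X$, so the interaction axiom $\neg \langle \alpha \rangle \phi \leftrightarrow [\alpha]\neg\phi$ yields $[\alpha]\neg\phi \in X$, hence $\neg\phi \in Y$, contradicting consistency of $Y$; the other direction is symmetric, using $[\alpha]\phi \leftrightarrow \neg \langle \alpha \rangle \neg \phi$.

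The atomic and $\bot$ cases are immediate from the definitions, once I note that $\sneg \bot \in X$ for every m.c.\ $X$ since $\top \leftrightarrow \sneg\bot$ is a theorem. For each of the Boolean connectives the positive clause is the standard calculation from the properties of m.c.\ sets, and the negative clause is reduced to the positive clauses for the subformulas by invoking the corresponding strong negation axiom from $H(\mathbf{BK})$: for instance, $\sneg(\phi \land \psi) \in X$ iff $\sneg\phi \lor \sneg\psi \in X$ iff $\sneg\phi \in X$ or $\sneg\psi \in X$, and likewise for $\lor$, $\to$, and $\sneg$ (for which the axiom $\sneg\sneg\phi \leftrightarrow \phi$ closes the negative subcase). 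The induction hypothesis then rewrites these in terms of $|\phi|^{\pm}_c$ and $|\psi|^{\pm}_c$.

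For the modal cases the positive clause for $[\alpha]\phi$ is the essential and only genuinely non-routine step. The easy inclusion is immediate from the definition of $R_c$. The reverse inclusion is the familiar existence lemma: if $[\alpha]\phi \notin X$, I set $X/\alpha := \{\psi \mid [\alpha]\psi \in X\}$ and show $X/\alpha \cup \{\neg \phi\}$ is consistent, for otherwise there are $\psi_1, \dots, \psi_n \in X/\alpha$ with $\vdash \bigwedge_i \psi_i \to \phi$, whence Necessitation and the $\mathbf{K}$-axiom for $[\alpha]$ (both available from Lemma~\ref{lem: bk axioms valid}) give $\vdash \bigwedge_i [\alpha]\psi_i \to [\alpha]\phi$, forcing $[\alpha]\phi \in X$. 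A Lindenbaum extension then furnishes the required $Y$ with $R_c(\alpha) XY$ and $\phi \notin Y$. The positive clause for $\langle \alpha \rangle \phi$ is handled symmetrically using the second formulation of $R_c(\alpha)$: for the nontrivial direction, $\langle \alpha \rangle \phi \in X$ implies that $X/\alpha \cup \{\phi\}$ is consistent (else $[\alpha]\neg\phi$, and hence $\neg \langle \alpha \rangle \phi$ by the interaction axiom, lies in $X$), so a Lindenbaum extension yields a suitable $Y$.

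Finally, the two negative modal clauses reduce to the positive ones already proved, via the modal interaction axioms: $X \in |[\alpha]\phi|^-_c$ iff $\sneg[\alpha]\phi \in X$ iff $\langle \alpha \rangle \sneg\phi \in X$ (by the axiom $\sneg[\alpha]\phi \leftrightarrow \langle\alpha\rangle\sneg\phi$), which by the positive case for $\langle \alpha \rangle$ gives some $Y$ with $R_c(\alpha)XY$ and $\sneg\phi \in Y$, i.e., $Y \in |\phi|^-_c$; dually for $\langle \alpha \rangle$ using $\sneg\langle\alpha\rangle\phi \leftrightarrow [\alpha]\sneg\phi$. The main obstacle is really just the existence lemma for $[\alpha]$ and $\langle \alpha \rangle$; everything else is a careful but routine bookkeeping exercise driven by the strong negation and modal interaction axioms.
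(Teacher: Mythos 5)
Your proposal is correct and follows essentially the same route as the paper's proof: a standard induction in which the strong negation axioms discharge the negative Boolean clauses, the positive $[\alpha]$ (and $\langle\alpha\rangle$) clause is settled by the canonical existence lemma with a Lindenbaum extension, and the negative modal clauses are reduced to the positive ones via the interaction axioms $\sneg[\alpha]\phi \leftrightarrow \langle\alpha\rangle\sneg\phi$ and $\sneg\langle\alpha\rangle\phi \leftrightarrow [\alpha]\sneg\phi$. Your additional check that the two formulations of $R_c(\alpha)$ coincide is a detail the paper leaves implicit, but it changes nothing of substance.
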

\begin{proof}
Standard inductive argument, we state only three cases explicitly. Firstly, $|\sneg\phi|^{-}_c = \{ X \mid \sneg\sneg \phi \in X \}$ and, as $\phi \leftrightarrow \sneg\sneg\phi$ is an axiom, this set is identical to $\{ X \mid \phi \in X \}$, i.e., to $|\phi|^{+}_c$. 

Secondly, $|[\alpha]\phi|^{+} = \{ X \mid [\alpha]\phi \in X \}$. We have to show that $[\alpha]\phi \in X$ iff for all $Y$, $R_c(\alpha)XY$ only if $Y \in |\phi|^{+}$. The left-to-right implication is trivial. The right-to-left implication is established by the following standard argument. Assume that $\neg [\alpha] \phi \in X$. We want to show that there is $Y$ such that $R_c(\alpha)XY$ and $\neg \phi \in Y$. We claim that the set
\begin{equation}
M = \{ \neg  \phi \} \cup \{ \psi \mid [\alpha]\psi  \in X \}
\end{equation}
is consistent. (We denote $\{ \psi \mid [\alpha]\psi  \in X \}$ as $X^{-\alpha}$.) To see this, take an arbitrary finite $\Psi = \{\psi_1, \ldots, \psi_m\} \subseteq X^{-\alpha}$. It is plain that $\langle \alpha \rangle \neg \phi \land [\alpha]\psi_1  \land \ldots \land [\alpha]\psi_m \in X$. Hence, by the $\mathbf{K}$-style properties of $[\alpha]$ and $\langle\alpha\rangle$, $\langle\alpha\rangle ( \neg \phi \land \psi_1 \land \ldots \land \psi_m) \in X$. By the Necessitation rule, $\neg (\neg \phi \land \psi_1 \land \ldots \land \psi_m)$ is not provable, so $\{ \neg \phi, \psi_1, \ldots, \psi_m\}$ is consistent. But $\Psi$ was chosen as an arbitrary finite subset of $X^{-\alpha}$. Consequently, $\{ \neg \phi \} \cup \Psi'$ for every finite $\Psi' \subseteq X^{-\alpha}$ can be shown to be consistent in this way. Hence, $M$ itself is consistent. By the Lindenbaum Lemma, $M$ can be extended to a m.c. $Y$ and it is plain that $R_c(\alpha)XY$ and $\neg \phi \in Y$.

Thirdly, $|[\alpha]\phi|^{-}_c = \{ X \mid \sneg [\alpha]\phi \in X \}$, a set identical to $\{ X \mid \langle \alpha \rangle \sneg \phi \in X \}$ as $\sneg [\alpha]\phi \leftrightarrow \langle\alpha\rangle\sneg\phi$ is an axiom. A straightforward adaptation of the argument given by \cite[p.~206]{Harel2000} shows that this set is identical to $\{ X \mid (\exists Y)(R_c(\alpha)XY \text{and } Y \in |\phi|^{-}_c) \}$. 
\end{proof}

\begin{lemma}\label{lem: canonical model lemma}
$\mathcal{M}_c$ is a non-standard model.
\end{lemma}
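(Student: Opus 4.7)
\textbf{Proof plan for Lemma \ref{lem: canonical model lemma}.} The task is to verify that $\mathcal{M}_c$ satisfies the compositional constraints on $R_c$ for complex programs and the four (in)equalities \eqref{eq: iteration 1}--\eqref{ewq: iteration 4}. I would organise this into two blocks: structural conditions on the relation $R_c(\alpha)$ for compound $\alpha$, and truth-set conditions for $\alpha^{\ast}$.

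For the structural conditions I would handle the program constructors one by one. The case of $\cup$ is symmetric: the $\supseteq$ direction uses the validity (hence provability and m.c.\ membership) of $[\alpha \cup \beta]\phi \to [\alpha]\phi \land [\beta]\phi$, while the $\subseteq$ direction is the standard m.c.\ argument: if neither $R_c(\alpha)XY$ nor $R_c(\beta)XY$ holds, witnesses $[\alpha]\phi, [\beta]\psi \in X$ with $\phi, \psi \notin Y$ can be combined via monotonicity of $[\alpha]$ and $[\beta]$ (which follows from $\mathbf{K}$) into $[\alpha \cup \beta](\phi \lor \psi) \in X$, contradicting $\phi \lor \psi \notin Y$. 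For test, $[\phi?]\psi \leftrightarrow (\phi \to \psi)$ immediately yields that $R_c(\phi?)XY$ forces $\phi \in X$ (otherwise $\phi \to \bot \in X$ gives $\bot \in Y$) and $X \subseteq Y$, hence $X=Y \in |\phi|^{+}_c$ by maximality; the converse is routine. For composition, the $\supseteq$ direction is immediate from $[\alpha;\beta]\phi \leftrightarrow [\alpha][\beta]\phi$, and the $\subseteq$ direction requires the familiar Lindenbaum construction: given $R_c(\alpha;\beta)XY$, the set $\{\psi \mid [\alpha]\psi \in X\} \cup \{\langle \beta \rangle \chi \mid \chi \in Y\}$ is shown consistent by a standard finite-conjunction argument using the $\mathbf{K}$-properties and Necessitation, and any m.c.\ extension $Z$ witnesses $R_c(\alpha)XZ$ and $R_c(\beta)ZY$.

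For iteration I would split off the inclusion $R_c(\alpha^{\ast}) \supseteq R_c(\alpha)^{\ast}$, proved by induction on $n$ using the axiom $[\alpha^{\ast}]\phi \to (\phi \land [\alpha][\alpha^{\ast}]\phi)$: if $R_c(\alpha)^n XY$ and $[\alpha^{\ast}]\phi \in X$, then $[\alpha^{\ast}]\phi$ propagates along the $R_c(\alpha)$-chain and $\phi$ lands in $Y$. The four (in)equalities \eqref{eq: iteration 1}--\eqref{ewq: iteration 4} then come essentially for free: each of them corresponds to a biconditional or implication explicitly listed in Lemma \ref{lem: pdl axioms valid} and therefore included in $H(\mathbf{BPDL})$, and since $|\chi|^{+}_c = \{X \mid \chi \in X\}$ for every formula $\chi$ by the previous lemma, provable equivalence of $\chi_1$ and $\chi_2$ yields $|\chi_1|^{+}_c = |\chi_2|^{+}_c$, and provable implication yields the corresponding inclusion.

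The only step that requires genuine care is the composition case, because one must carry out the m.c.\ construction inside the Belnapian setting; however, since the definition of $R_c$ uses only the positive side of modal formulas and the $\mathbf{K}$-axiom and Necessitation transfer verbatim (Lemma \ref{lem: bk axioms valid}), the classical PDL argument goes through without modification. Notice in particular that I do \emph{not} need to establish $R_c(\alpha^{\ast}) = R_c(\alpha)^{\ast}$ or even the corresponding truth-lemma-style characterisation; the definition of non-standard model demands only the superset inclusion and the two pairs of (in)equalities, both of which reduce to axiomatic facts already at our disposal. This is precisely why the completeness proof can route through non-standard models and let filtration (Section \ref{sec: decidability}) do the work of recovering standard semantics.
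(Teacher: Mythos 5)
Your proposal is correct and follows essentially the same route as the paper: the compositional conditions for $\cup$, $;$, $?$ and the four conditions \eqref{eq: iteration 1}--\eqref{ewq: iteration 4} are handled exactly as in the classical canonical-model argument of Harel--Kozen--Tiuryn (which the paper simply cites), reducing the truth-set conditions to the axioms via the previous lemma, while the only part the paper spells out, the inclusion $R_c(\alpha)^{\ast} \subseteq R_c(\alpha^{\ast})$, you prove by the same propagation of $[\alpha^{\ast}]\phi$ along finite $R_c(\alpha)$-chains using the fixpoint axiom. Your observation that only the superset inclusion (not equality) is needed matches the paper's use of non-standard models, so there is no gap.
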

\begin{proof}
We need to be establish that $R$ satisfies the conditions required by the definition of a non-standard model. The argument for $\cup$, $;$, $?$ and the iteration equations \eqref{eq: iteration 1} -- \eqref{ewq: iteration 4} is virtually identical to that given by \cite[p.206--8]{Harel2000}. To show that $R_c(\alpha)^{\ast} \subseteq R_c(\alpha^{\ast})$, assume that $\langle X,Y \rangle \in R_c(\alpha)^{\ast}$ but $\langle X,Y \rangle \not\in R_c(\alpha^{\ast})$. Hence, there is $\phi$ such that $[\alpha^{\ast}]\phi \in X$ but $\phi \not\in Y$. However, $\langle X,Y \rangle \in R_c(\alpha)^{\ast}$ implies that either $X = Y$ or else there are $Z_0, \ldots, Z_m$ such that $Z_0 = X$, $Z_m = Y$ and $\langle Z_k, Z_{k+1} \rangle \in R_c(\alpha)$ for $1 \leq k < m$. In the former case, $\phi \in Y$ by the axiom $[\alpha^{\ast}]\phi \leftrightarrow \left( \phi \land [\alpha][\alpha^{\ast}]\phi\right)$, a contradiction. In the latter case, $[\alpha^{\ast}]\phi \in Z_k$ entails $[\alpha][\alpha^{\ast}]\phi \in Z_{k+1}$ by the same axiom for all $1 \leq k < n$ and, hence, $[\alpha^{\ast}]\phi \in Y$. Hence, $\phi \in Y$, a contradiction.  
\end{proof}

Since $\mathbf{BPDL}$ is not compact, it cannot enjoy a strongly complete axiomatisation (as $\mathbf{BK}$ does). However, weak completeness is another story.

\begin{theorem}
$\phi$ is provable in $H(\mathbf{BPDL})$ iff $\phi$ is valid in $\mathbf{BPDL}$.
\end{theorem}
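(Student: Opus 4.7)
Soundness, the left-to-right direction, is straightforward: Lemma \ref{lem: bk axioms valid} and Lemma \ref{lem: pdl axioms valid} establish validity of all axiom schemata of $H(\mathbf{BPDL})$ in every standard model, and Lemma \ref{lem: bk axioms valid} also states that the set of formulas valid in any model is closed under Modus ponens and $[\alpha]$-Necessitation. An easy induction on the length of a derivation then shows provability implies validity.

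For completeness, the plan is the usual contrapositive argument, but routed through the non-standard canonical model followed by filtration, since the canonical model itself need not be standard (because $R_c(\alpha^{\ast})$ need only contain $R_c(\alpha)^{\ast}$, not coincide with it). Suppose $\phi$ is not provable. Then $\{\neg\phi\}$ is $H(\mathbf{BPDL})$-consistent: if it were not, $\neg\neg\phi$ would be provable, whence $\phi$ would be provable using the available classical-logic axioms (note $\neg\phi$ is $\phi\to\bot$, so $\neg\neg\phi\to\phi$ is a propositional tautology in the $\{\to,\bot\}$-fragment). By the Lindenbaum Lemma, extend $\{\neg\phi\}$ to an m.c.~set $X_0$. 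Then $X_0$ is a state of the canonical model $\mathcal{M}_c$, which by Lemma \ref{lem: canonical model lemma} is a non-standard model. Because $\neg\phi\in X_0$, the truth-lemma cases for $\to$ and $\bot$ give $X_0\not\models^{+}_c \phi$.

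To recover a standard model, filter $\mathcal{M}_c$ through $T = FL(\phi)$ to obtain $(\mathcal{M}_c)_T$. By construction the filtration is a standard dynamic Odintsov--Wansing model. Since $\phi\in FL(\phi)$, Lemma \ref{lem: filtration}(vi) applied to $\mathcal{M}_c$ and $X_0$ yields $[X_0]_T\models^{+}_{(\mathcal{M}_c)_T} \phi$ iff $X_0\models^{+}_c \phi$; combined with the previous paragraph this gives $[X_0]_T\not\models^{+}_{(\mathcal{M}_c)_T}\phi$. Hence $\phi$ is falsified in a standard model and is therefore not valid in $\mathbf{BPDL}$.

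The only real work is verifying that the Filtration Lemma is indeed applicable to $\mathcal{M}_c$; this is guaranteed by the statement of Lemma \ref{lem: filtration}, which explicitly covers non-standard models. The subtle point that could go wrong, and which the whole architecture (non-standard canonical model plus filtration into standard models) is designed to handle, is exactly the mismatch between $R_c(\alpha)^{\ast}$ and $R_c(\alpha^{\ast})$: a direct attempt to read off validity from $\mathcal{M}_c$ alone would fail on iteration, and the detour through the finite filtered model is what forces the standard semantics for $\alpha^{\ast}$ while preserving the relevant truth values of $\phi$. Everything else in the proof is bookkeeping.
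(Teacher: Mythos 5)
Your proposal is correct and follows essentially the same route as the paper: soundness from Lemmas \ref{lem: bk axioms valid} and \ref{lem: pdl axioms valid}, and completeness by taking an m.c.\ set containing $\neg\phi$, using the canonical-model lemma (Lemma \ref{lem: canonical model lemma}) to get a non-standard model, and then applying the Filtration Lemma to obtain a standard countermodel. The only (immaterial) difference is that you filter through $FL(\phi)$ and conclude $[X_0]\not\models^{+}\phi$, whereas the paper filters through $FL(\neg\phi)$ and concludes $[X]\models^{+}\neg\phi$; both are licensed by item (vi) of Lemma \ref{lem: filtration}.
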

\begin{proof}
Soundness follows from Lemmas \ref{lem: bk axioms valid} and \ref{lem: pdl axioms valid}. Completeness follows from Lemmas \ref{lem: filtration} and \ref{lem: canonical model lemma}. If $\phi$ is not provable, then $X \in |\neg \phi |^{+}_c $ for some m.c. set $X$. By the Filtration Lemma, $(\mathcal{M}_c)_{FL(\neg\phi)}$ is a standard model such that $[X] \in |\neg\phi|^{+}_{FL(\neg\phi)}$.
\end{proof}

\section{Conclusion}
This article introduced $\mathbf{BPDL}$, a combination of propositional dynamic logic $\mathbf{PDL}$ with the four-valued Belnapian modal logic $\mathbf{BK}$. The logic is expected to be useful in formalising reasoning about the properties of algorithmic transformations of possibly incomplete and inconsistent database-like bodies of information. We modified the standard proofs based on filtration and the canonical-model technique and, as the main technical results of the article, established decidability of $\mathbf{BPDL}$ and provided it with a sound and weakly complete axiomatisation. The main message here is that the standard techniques are easily adapted to the four-valued setting. 

The number one topic for future research is the complexity of the satisfiability problem for $\mathbf{BPDL}$. The problem is $\mathit{EXPTIME}$-complete for $\mathbf{PDL}$ and it will be interesting to see whether the situation gets worse in the case of $\mathbf{BPDL}$. Our strategy of tackling the problem will be, as for the results already achieved, to try to adapt the proof technique used in the case of $\mathbf{PDL}$ to the four-valued setting. We shall also investigate Belnapian versions of some extensions of $\mathbf{PDL}$. The obvious choice is the first-order dynamic logic $\mathbf{DL}$, but also concurrent $\mathbf{PDL}$ modelling parallel execution of programs. 
Last but not least, a more thorough examination of possible applications of $\mathbf{BPDL}$ will be an interesting enterprise. 

\appendix
\section{Proof of the Filtration Lemma}\label{sec: appendix}
The proof is a variation of the standard proof using simultaneous induction on the subexpression (subformula or subprogram) relation \cite{Gold92,Harel2000}. In proving the claim of any item (i)--(vii) for any special case of $\alpha$ or $\psi$, we assume that all the items hold for all subexpressions of $\alpha$ and $\psi$. Only some steps of the proof are explicitly stated here (and, perhaps, in more detail than an expert reader needs).

\subsection{$\mathcal{M}$ is a standard model}
\emph{(i), $\alpha = \beta^{\ast}$.} If $R(\beta^{\ast})xy$, then, since $R(\beta^{\ast})$ is the reflexive transitive closure of $R(\beta)$, there are $z_0, \ldots, z_n$ such that $z_0 = x$, $z_n = y$ and either $n = 0$ or else $R(\beta)z_iz_{i+1}$ for $0 \leq i < n$. If $n = 0$, then $R_{FL(\phi)}(\beta^{\ast})[z_0][z_n]$ by the definition of $R_{FL(\phi)}(\beta^{\ast})$. Assume $n > 0$. If $[\beta^{\ast}]\psi$ ($\langle \beta^{\ast}\rangle$) is in $FL(\phi)$, then so is $[\beta][\beta^{\ast}]\psi$ ($\langle \beta \rangle \langle \beta^{\ast} \rangle \psi$). $\beta$ is a subexpression of $\beta^{\ast}$, so, in both cases, we may apply the induction hypothesis (IH): $R(\beta)z_iz_{i+1}$ implies $R_{FL(\phi)}(\beta)[z_i][z_{i+1}]$ for $0 \leq i < n$. Hence, $R_{FL(\phi)}(\beta^{\ast})[z_0][z_n]$ by the definition of $R_{FL(\phi)}$.

\emph{(i), $\alpha = \chi?$.} If $R(\chi?)xy$, then $x = y$ and $x \models^{+} \chi$. If $[\chi?]\psi$ ($\langle \chi?\rangle \psi$) is in $FL(\phi)$, then so is $\chi$. $\chi$ is a subexpression of $[\chi?]\psi$ ($\langle \chi? \rangle \psi$) and a formula, so we may apply the IH of (vi): $x \models^{+} \chi$ entails $[x] \models^{+} \chi$. But $[x] = [y]$ and, hence, $R_{FL(\phi)}(\chi?)[x][y]$ by the definition of $R_{FL(\phi)}$.

\emph{(ii), $\alpha = \beta^{\ast}$.} If $R_{FL(\phi)}(\beta^{\ast})[x][y]$, then there are $z_0, \ldots, z_n$ such that $[z_0] = [x]$, $[z_n] = [y]$ and either $n = 0$ or else $R_{FL(\phi)}(\beta)[z_i][z_{i+1}]$ for $0 \leq i < n$. If $n = 0$, then $x \models^{+} [\beta^{\ast}]\psi$ entails $y \models^{+} \psi$ by the assumption $[\beta^{\ast}]\psi \in FL(\phi)$ and Lemma \ref{lem: pdl axioms valid}(iv). Assume $n > 0$. We prove that 
\begin{equation}\label{eq: claim1}
x \models^{+} [\beta^{\ast}]\psi \Longrightarrow z_k \models^{+} [\beta^{\ast}]\psi \qquad (0 \leq k \leq n)
\end{equation}
by induction on $k$. If $k = 0$, then the claim follows from the assumption $[\beta^{\ast}]\psi \in FL(\phi)$. Assume that the claim holds for $k = l$. We prove that it holds for $k = l+1$ as well. The assumption is that $x \models^{+} [\beta^{\ast}]\psi$ entails $z_l \models^{+} [\beta^{\ast}] \psi$. By Lemma \ref{lem: pdl axioms valid}(iv), $z_l \models [\beta][\beta^{\ast}]\psi$. $\beta$ is a subexpression of $\beta^{\ast}$ and $[\beta][\beta^{\ast}]\psi \in FL(\phi)$, so we may use IH of item (ii) of the Filtration Lemma: $R_{FL(\phi)}(\beta)[z_k][z_{l+1}]$ entails $z_{l+1} \models^{+} [\beta^{\ast}]\psi$. This proves \eqref{eq: claim1}. Now $x \models^{+} [\beta^{\ast}]\psi$ entails $z_n \models [\beta^{\ast}]\psi$ by \eqref{eq: claim1} and $z_n \models [\beta^{\ast}]\psi$ entails $z_n \models^{+} \psi$ by Lemma \ref{lem: pdl axioms valid}(iv). But $[z_n] = [y]$ and $\psi \in FL(\phi)$, so $y \models^{+} \psi$.

\emph{(iv), $\alpha = \beta^{\ast}$.} If $R_{FL(\phi)}(\beta^{\ast})[x][y]$, then there are $z_0, \ldots, z_n$ such that $[z_0] = [x]$, $[z_n] = [y]$ and either $n = 0$ or else $R_{FL(\phi)}(\beta)[z_i][z_{i+1}]$ for $0 \leq i < n$.  If $n = 0$, then $x \models^{-} \langle\beta^{\ast}\rangle\psi$ entails $y \models^{-} \langle \beta^{\ast} \rangle \psi$ by the assumption $\langle \beta^{\ast} \rangle \psi \in FL(\phi)$. Hence, $y \models^{+} [\beta^{\ast}] \sneg \psi$ by Lemma \ref{lem: bk axioms valid}. By Lemma \ref{lem: pdl axioms valid}(iv), $y \models^{+} \sneg \psi$. Hence, $y \models^{-} \psi$. Next, assume that $n > 0$. We prove that 
\begin{equation}\label{eq: claim2}
x \models^{-} \langle\beta^{\ast}\rangle\psi \Longrightarrow z_k \models^{-} \langle\beta^{\ast}\rangle\psi \qquad (0 \leq k \leq n)
\end{equation}
by induction on $k$. If $k = 0$,  then the claim follows from the assumption $\langle\beta^{\ast}\rangle\psi \in FL(\phi)$. Assume that the claim holds for $k = l$. We prove that it holds for $k = l+1$ as well. The assumption is that $x \models^{-} \langle\beta^{\ast}\rangle\psi$ entails $z_l \models^{-} \langle\beta^{\ast}\rangle \psi$. By Lemmas \ref{lem: bk axioms valid} and \ref{lem: pdl axioms valid}(iv), $z_l \models^{-} \langle \beta \rangle \langle \beta^{\ast} \rangle \psi$. $\beta$ is a subexpression of $\beta^{\ast}$, so we may use the IH to infer $z_{l+1} \models^{-} \langle \beta^{\ast} \rangle \psi$. This proves \eqref{eq: claim2}. Assume that $x \models^{-} \langle \beta^{\ast} \rangle \psi$. By \eqref{eq: claim2}, $z_n \models^{-} \beta^{\ast} \rangle \psi$. By the assumption that $\langle\beta^{\ast}\rangle\psi \in FL(\phi)$, $y \models^{-}  \beta^{\ast} \rangle \psi$. By Lemmas \ref{lem: bk axioms valid} and \ref{lem: pdl axioms valid}(iv), $y \models^{-} \psi$.

\emph{(v), $\alpha = \beta^{\ast}$.} If $R_{FL(\phi)}(\beta^{\ast})[x][y]$, then there are $z_0, \ldots, z_n$ such that $[z_0] = [x]$, $[z_n] = [y]$ and either $n = 0$ or else $R_{FL(\phi)}(\beta)[z_i][z_{i+1}]$ for $0 \leq i < n$. If $n = 0$, then the reasoning is similar as in the above cases. Hence, assume that $n > 0$. We prove that 
\begin{equation}\label{eq: claim3}
z_n \models^{-} \psi \Longrightarrow z_{n-k} \models^{-} [\beta^{\ast}]\psi \qquad (0 \leq k \leq n)
\end{equation}
by induction on $k$. The case $k = 0$is trivial. Assume that the claim holds for $k = l$. We prove that it holds for $k = l+1$ as well. The assumption is that $z_n \models^{-} \psi$ entails $z_{n-l} \models^{-} [\beta^{\ast}]\psi$. There are two possibilities. Either (a) $z_{n-l} \models^{-} \psi$ or (b) $z_{n-l} \models^{-} [\beta][\beta^{\ast}]\psi$. If (a), then $z_{n-(l+1)} \models^{-} [\beta]\psi$ by IH and $z_{n-(l+1)} \models^{-} [\beta^{\ast}] \psi$ by $R_\beta \subseteq (R_\beta)^{\ast}$. If (b), then IH entails that $z_{n - (l+1)} \models^{-} [\beta][\beta][\beta^{\ast}] \psi$. By $R_\beta \subseteq (R_\beta)^{\ast}$, $z_{n - (l+1)} \models^{-} [\beta^{\ast}]\psi$. This proves \eqref{eq: claim3}. Now $y \models^{-} \psi$ only if $z_n \models^{-} \psi$ ($\psi \in FL(\phi)$) only if $z_0 \models^{-} [\beta^{\ast}]\psi$ \eqref{eq: claim3} only if $x \models^{-} [\beta^{\ast}]\psi$ ($[\beta^{\ast}]\psi \in FL(\phi)$). 

\emph{(vi), $\psi = \sneg \chi$.} $x \models^{+} \sneg \chi$ iff $x \models^{-} \chi$. $\chi$ is a subexpression of $\sneg \chi$, so we may use IH of (vii): and infer $x \models^{-} \chi$ iff $[x] \models^{-} \chi$ iff $[x] \models^{+} \sneg \chi$ (by the definition of $\models^{+}_{\mathcal{M}_{FL(\phi)}}$. In fact, this case requires to introduce item (vii) into the Filtration Lemma. 

\emph{(vi), $\psi = [\alpha]\chi$.} Assume $[\alpha]\chi \in FL(\phi)$. Then $\chi \in FL(\phi)$. To prove the left-to-right implication, assume that $x \models^{+} [\alpha]\chi$ and $R_{FL(\phi)}(\alpha)[x][y]$. $\alpha$ is a subexpression of $[\alpha]\chi$, so we may use IH of (ii) to infer $y \models^{+} \chi$. By IH, $[y] \models^{+} \chi$. To prove the right-to-left implication, assume that $[x] \models^{+} [\alpha]\chi$ and $R_\alpha xy$. IH of (i) implies that $R_{FL(\phi)}(\alpha)[x][y]$. Consequently, $[y] \models^{+} \chi$ and, by IH, $y \models^{+} \chi$.

\emph{(vi), $\psi = \langle \alpha \rangle \chi$.} We prove only the right-to-left implication. If $[x] \models^{+} \langle \alpha \rangle \chi$, then there is $y$ such that $R_{FL(\phi)}(\alpha)[x][y]$ and $[y] \models^{+} \chi$. By IH, $y \models^{+} \chi$. By IH of (iii), $x \models^{+} \langle \alpha \rangle \chi$. This was the reason we had to include item (iii) of the Lemma.

\emph{(vii), $\psi = p$.} Let $p \in FL(\phi)$. The left-to-right implication is trivial. To prove the converse, assume that $[x] \models^{-} p$. This means that there is $x' \equiv x$ such that $x' \models^{-} p$. By the definition of filtration, $x \models^{-} p$ as well. Note that to prove this implication it was necessary to define $\equiv$ in terms of both $\models^{+}$ and $\models^{-}$.

\emph{(vii), $\psi = [\alpha]\chi$.} We prove only the right-to-left implication. If $[x] \models^{-} [\alpha]\chi$, then there is $y$ such that $R_{FL(\phi)}(\alpha)[x][y]$ and $[y] \models^{-} \chi$. By IH of (v), $y \models^{-} [\alpha]\chi$. This case required to introduce item (v).

\emph{(vii), $\psi = \langle \alpha \rangle \chi$.} We prove only the left-to-right implication. Assume $x \models^{-} \langle \alpha \rangle \chi$ and $R_{FL(\phi)}(\alpha)[x][y]$. By IH of (iv), $y \models^{-} \chi$. By IH, $[y] \models^{-} \chi$. Hence, $[x] \models^{-} \langle \alpha \rangle \chi$. This case required to introduce item (iv) of the Lemma.

Proofs of other cases are similar or standard. $\qed$

\subsection{$\mathcal{M}$ is a non-standard model}
As in the standard proof for this case, the only claim where the assumption $R_{\alpha^{\ast}} = (R_\alpha)^{\ast}$ was used is (i), $\alpha = \beta^{\ast}$. Hence, we have to prove that if $[\beta^{\ast}]\psi \in FL(\phi)$ or $\langle \beta^{\ast} \rangle \psi \in FL(\phi)$, then $R(\beta^{\ast}) xy$ only if $R_{FL(\phi)}(\beta^{\ast})[x][y]$. Our argument is very close to the one given in \cite[sec. 6.3]{Harel2000}.

Assume that $\langle x,y \rangle \in R(\beta^{\ast})$. We want to show that $\langle [x],[y] \rangle \in R_{FL(\phi)}(\beta^{\ast})$, or equivalently that $y \in E$, where
\begin{align*}
E & = \{ z \mid \langle [x],[z]\rangle \in  R_{FL(\phi)}(\beta^{\ast}) \}
\end{align*}
Recall that $[\cdot]$ is given by some specific finite set$FL(\phi)$ of formulas. For any $[z]_{FL(\phi)}$, define $X_{[z]}$ to be the smallest set of formulas such that, for all $\chi \in FL(\phi)$:
\begin{itemize}
\item If $z \models^{+} \chi$, then $\chi \in X_{[z]}$;
\item If $z \not\models^{+} \chi$, then $\neg \chi\in X_{[z]}$.
\end{itemize}
(Note that we are using `$\neg$' not `$\sneg$'.) Obviously, $X_{[z]}$ is finite for all $z$. Define \[ \psi_{[z]} = \bigwedge X_{[z]} \] It is not hard to show that, for all $w \in S$,
\begin{equation}\label{eq: claim4}
w \models^{+} \psi_{[z]} \iff w \equiv z
\end{equation}
(For instance, assume that $w \not\equiv z$ because there is $\theta \in FL(\phi)$ such that $w \models^{-} \theta$ and $z \not\models^{-} \theta$. But then $z \not\models^{+} \sneg \theta$ and, consequently, $\neg\sneg\theta \in X_{[z]}$. But then $w \not\models^{+} \psi_{[z]}$ because $w \models^{+} \sneg \theta$.) Now define \[ \psi_E = \bigvee_{z \in E} \psi_{[z]}\] It is not hard to show that $\psi_E$ defines $E$, i.e., for all $w \in S$
\begin{equation}\label{eq: claim5}
w \in E \iff w \models^{+} \psi_E
\end{equation}
(For instance, assume that $w \in E$ but $w \not\models^{+} \psi_E$. Then $w \not\models^{+} \psi_{[z]}$ for all $z \in E$. In particular, $w \not\models^{+} \psi_{[w]}$. \eqref{eq: claim4} entails that this is impossible.)

It is easy to show that $E$ is closed under $R_\beta$, i.e., for all $z, z'$,
\begin{equation}\label{eq: claim6}
z \in E \:\: \& \:\: R_\beta zz' \quad \Longrightarrow \quad z' \in E
\end{equation}
($\beta$ is a subexpression of $\beta^{\ast}$, so $R_\beta zz'$ entails $R_{FL(\phi)}(\beta)[z][z']$ by IH. By the definition of $E$, $z \in E$ means that $R_{FL(\phi)}(\beta^{\ast})[x][z]$. Consequently, $R_{FL(\phi)}(\beta^{\ast})[x][z']$. In other words, $z' \in E$.) \eqref{eq: claim6} means that $\psi_E \to [\beta] \psi_E$ is valid in $\mathcal{M}$. By Lemma \ref{lem: bk axioms valid}(Nec. rule), so is $[\beta^{\ast}] \left(\psi_E \to [\beta] \psi_E \right)$. By Lemma \ref{lem: pdl axioms valid}(v), the induction axiom $\left( \phi \land [\alpha^{\ast}](\phi \to [\alpha]\phi)\right) \to [\alpha^{\ast}]\phi$ is also valid in $\mathcal{M}$. 

It is also easy to show that $x \in E$ ($R(\beta^{\ast})$ is a superset of the reflexive transitive closure of $R(\beta)$, so it contains the identity relation of $S_{FL(\phi)}$.) Hence, $x \models^{+} \psi_E \land [\beta^{\ast}] \left(\psi_E \to [\beta] \psi_E \right)$. By the validity of the induction axiom in $\mathcal{M}$, $x \models^{+} [\beta^{\ast}]\psi_E$. Hence, if $R(\beta^{\ast})xy$, then $y \models^{+} \psi_E$. By \eqref{eq: claim5}, $y \in E$. $\qed$

\section{The axiom system $H(\mathbf{BPDL})$}\label{sec: appendix b}
\begin{enumerate}
\item Axioms of classical propositional logic in the language $\{ AF, \bot, \to, \land, \lor \}$ and Modus ponens;
\item Strong negation axioms:
	\begin{align*}
	\sneg \sneg \phi  & \leftrightarrow \phi,\\
	\sneg (\phi \land \psi) & \leftrightarrow (\sneg \phi \lor \sneg \psi),\\
	\sneg (\phi \lor \psi) & \leftrightarrow (\sneg \phi \land \sneg \psi),\\
	\sneg (\phi \to \psi) & \leftrightarrow (\phi \land \sneg \psi),\\
	\top &\leftrightarrow \sneg \bot;
	\end{align*}
\item Modal axiom $[\alpha] (\phi \to \psi) \to ([\alpha] \phi \to [\alpha] \psi)$ and the Necessitation rule $\phi / [\alpha] \phi$;
\item $\mathbf{PDL}$ axiom schemata
	\begin{align*}
[\alpha \cup \beta]\phi \leftrightarrow \left([\alpha] \phi \land [\beta]\phi\right) &\text{ and } \langle \alpha \cup \beta \rangle \phi \leftrightarrow \left(\langle \alpha \rangle \phi \lor \langle \beta \rangle \phi\right),\\
[\alpha;\beta]\phi \leftrightarrow [\alpha][\beta]\phi &\text{ and } \langle \alpha;\beta\rangle\phi \leftrightarrow \langle \alpha \rangle \langle \beta \rangle \phi,\\
[\psi?]\phi \leftrightarrow \left( \psi \to \phi \right) &\text{ and } \langle \psi? \rangle \phi \leftrightarrow \left( \psi \land \phi \right),\\
[\alpha^{\ast}]\phi \leftrightarrow \left( \phi \land [\alpha][\alpha^{\ast}]\phi\right) &\text{ and } \langle \alpha^{\ast}\rangle\phi \leftrightarrow \left( \phi \lor \langle\alpha\rangle\langle\alpha^{\ast}\rangle\phi\right),\\
\left( \phi \land [\alpha^{\ast}](\phi \to [\alpha]\phi)\right) \to [\alpha^{\ast}]\phi &\text{ and } \langle\alpha^{\ast}\rangle\phi \to \left(\phi \lor \langle\alpha^{\ast}\rangle(\neg \phi \land \langle\alpha\rangle \phi)\right);
	\end{align*}
\item Modal interaction principles:
	\begin{align*}
	\neg [\alpha] \phi & \leftrightarrow \langle\alpha\rangle \neg \phi,\\
	\neg \langle\alpha\rangle \phi & \leftrightarrow [\alpha] \neg \phi,\\
	\sneg [\alpha] \phi & \leftrightarrow \langle\alpha\rangle \sneg \phi,\\
	[\alpha] \phi & \leftrightarrow \sneg \langle\alpha\rangle \sneg \phi,\\
	\sneg \langle\alpha\rangle \phi & \leftrightarrow [\alpha] \sneg \phi,\\
	\langle\alpha\rangle \phi & \leftrightarrow \sneg [\alpha] \sneg \phi.
	\end{align*}
\end{enumerate} 

\end{document}